\definecolor{Gray}{gray}{0.9}
\newcolumntype{g}{>{\columncolor{Gray}}r}
 \newtheorem{fact}{Fact}
\newtheorem{lemma}{Lemma} 
\newtheorem{theorem}{Theorem} \newtheorem{corollary}{Corollary}
\newtheorem{definition}{Definition}
\def\MR{MR$(M_G,M_L)$} 
 \def\Pr{\mbox{Pr}}
\def\dist{\mbox{dist}}
\newcommand{\BO}[1]{{O}\left(#1\right)}
\newcommand{\BT}[1]{{\Theta}\left(#1\right)}
\newcommand{\BOM}[1]{{\Omega}\left(#1\right)}
 \newcommand*\Let[2]{#1 $\gets$
  #2}
\newcommand{\affaddr}[1]{{\small #1}}
\newcommand{\email}[1]{{\tt\small #1}}
\begin{document}
%

\title{Space and Time Efficient Parallel Graph Decomposition, Clustering, and Diameter Approximation}
%
%
%
%
%

%
\date{}
\author{
%
%
Matteo Ceccarello \\
       \affaddr{Department of Information Engineering}\\
       \affaddr{University of Padova}\\
       \affaddr{Padova, Italy}\\
       \email{ceccarel@dei.unipd.it}
\and
Andrea Pietracaprina \\
       \affaddr{Department of Information Engineering}\\
       \affaddr{University of Padova}\\
       \affaddr{Padova, Italy}\\
       \email{capri@dei.unipd.it}
\and
Geppino Pucci \\
       \affaddr{Department of Information Engineering}\\
       \affaddr{University of Padova}\\
       \affaddr{Padova, Italy}\\
       \email{geppo@dei.unipd.it}
       \and
Eli Upfal\\
       \affaddr{Department of Computer Science}\\
       \affaddr{Brown University}\\
       \affaddr{Providence, RI, USA}\\
       \email{eli\_upfal@brown.edu}
}

\maketitle

\begin{abstract}\sloppy
  We develop a novel parallel decomposition strategy for unweighted,
  undirected graphs, based on growing disjoint connected clusters from
  batches of centers progressively selected from yet uncovered
  nodes. With respect to similar previous decompositions, our strategy
  exercises a tighter control on both the number of clusters and their
  maximum radius.

  We present two important applications of our parallel graph
  decomposition: (1) $k$-center clustering approximation; and (2)
  diameter approximation. In both cases, we obtain algorithms which
  feature a polylogarithmic approximation factor and are amenable to a
  distributed implementation that is geared for massive
  (long-diameter) graphs.  The total space needed for the computation
  is linear in the problem size, and the parallel depth is
  substantially sublinear in the diameter for graphs with low doubling
  dimension. To the best of our knowledge, ours are the first parallel
  approximations for these problems which achieve sub-diameter
  parallel time, for a relevant class of graphs, using only linear
  space.  Besides the theoretical guarantees, our algorithms allow for
  a very simple implementation on clustered architectures: we report
  on extensive experiments which demonstrate their effectiveness and
  efficiency on large graphs as compared to alternative known
  approaches.
\end{abstract}

\section{Introduction}

\emph{Graph analytics} is rapidly establishing itself as a major
discovery tool in such diverse application domains as road systems,
social networks, natural language processing, biological pattern
discovery, cybersecurity, and more.  Graph analytics tasks for big
data networks are typically run on distributed architectures such as
clusters of loosely-coupled commodity servers, where the challenge is
to minimize communication overhead between processors, while each
processor can store only a small fraction of the entire network.  A
number of computational models proposed in recent years
\cite{KarloffSV10,GoodrichSZ11,PietracaprinaPRSU12}  provide
excellent rigorous frameworks for studying algorithms for  massive data,
subject to these constraints.  Under this new
{computational paradigm,}
state-of-the-art graph algorithms often do not scale up efficiently to
process massive instances, since they either require superlinear
memory or exhibit long critical paths resulting in a large number of
communication rounds.

In this work we focus on {\em graph decomposition}, which is a
fundamental primitive for graph analytics as well as for several other
application contexts, especially in distributed settings, where
decompositions are often at the base of efficient parallel solutions.
We develop an efficient parallel decomposition algorithm for
partitioning the nodes of an unweighted, undirected graph into
disjoint, internally connected \emph{clusters}, which is able to
control the maximum radius of the clusters (the maximum distance of a
node in a cluster to the cluster's center).  Similarly to other known
decomposition approaches, our algorithm grows clusters from several
batches of centers which are progressively selected from the uncovered
nodes. However, rather than fixing the radius of each grown cluster
\emph{a priori}, or randomly delaying the activation of the centers,
as in previous works, we activate a new batch of centers every time
that the number of uncovered nodes halves, while continuing growing
the clusters of previously activated centers. The idea behind such a
strategy is to force more clusters to grow in poorly connected regions
of the graph while keeping both the total number of clusters and the
maximum cluster radius under control.

\sloppy
{We demonstrate the quality and utility of our decomposition strategy by applying it to the solution of two
extensively studied  problems, metric $k$-center clustering and diameter approximation, for which we obtain  space and time
efficient parallel algorithms.}

{In the  \emph{metric $k$-center} clustering problem~\cite{Gonzalez85,HochbaumS87} the goal is to partition an undirected
graph into $k$ clusters so that the maximum radius of the clusters is minimized.
The problem is NP-hard and we are therefore interested in efficient approximations. Given an $n$ node
unweighted undirected graph, building on our parallel graph decomposition method, we obtain a randomized
$\BO{\log^3 n}$-approximation algorithm to the $k$-center problem.}
The algorithm can be
implemented on the MapReduce (MR) model of \cite{PietracaprinaPRSU12}
in a number of parallel rounds proportional to the maximum cluster
radius using overall linear space, as long as each processing node is
provided with $\BOM{n^\epsilon}$ local space, for any constant
$\epsilon > 0$.  In order to derive a more explicit bound on the
parallel complexity, we analyze the maximum cluster radius, hence the
number of rounds, as a function of the doubling dimension of the graph
(see Definition~\ref{doublingdim}), showing that for a graph of
diameter $\Delta$ and doubling dimension $b >0$ the algorithm can
provide a decomposition into $k$ clusters with maximum cluster radius
$\tilde{O}(\lceil \Delta/k^{1/b} \rceil)$.

{Next, we apply our graph decomposition strategy to a
challenging problem in the context of graph analytics, namely, the
approximation of the graph diameter, a global property of a graph, in a number
of parallel rounds which is substantially less than the diameter itself, and using
linear global space and local memory at each processor sufficient to
store only a small fraction of the graph.}  We remark that known
parallel approaches to estimating the diameter of a graph either require
up to quadratic space (e.g., using transitive closure) or require a
number of parallel rounds which is inherently linear in the diameter
(e.g., using straightforward parallelization of breadth-first search
or neighborhood function estimations).

To estimate the diameter of a graph $G$, we first compute a
decomposition of $G$ of suitable granularity with our novel algorithm,
and then we estimate the diameter through the diameter of the
\emph{quotient graph}, that is, the graph whose nodes correspond to
the clusters and whose edges connect clusters containing nodes which
are adjacent in $G$. The granularity is chosen so that the size of the
quotient graph is small enough so that its diameter can be computed
using limited local memory and very few communication rounds.

We show that on any unweighted, undirected connected graph $G$ with
$n$ nodes and $m$ edges, our algorithm returns an upper bound to its
diameter which is a factor $\BO{\log^3 n}$ away from the true value,
with high probability.  The algorithm can be implemented on the
aforementioned MR model using overall linear space and
$\BOM{n^\epsilon}$ local space, with $\epsilon >0$, in a number of
parallel rounds which is $\tilde{O}(\lceil \Delta /
(\max\{m^{1/3}n^{\epsilon/6},n^{\epsilon'} \})^{1/b} \rceil)$ where
$b$ is the doubling dimension of the graph and $\epsilon'$ is any
constant less than $\epsilon$. Observe that for graphs with small
(e.g., constant) doubling dimension, which arise in important
practical realms \cite{Karger02},  the number of rounds can be made
asymptotically much smaller than the diameter if sufficient yet
sublinear local memory is available.  While a similar approach for
diameter estimation has been used in the past in the external-memory
setting (see Section~\ref{previouswork}), the algorithm presented
here, to the best of our knowledge, is the first linear-space
distributed algorithm for the problem requiring a number of parallel
rounds which is sublinear in the diameter.

A very desirable feature of our algorithms is that they lend
themselves to efficient practical implementations.  We report on an
extensive set of experiments conducted on a number of large graphs. A
first batch of experiments shows the effectiveness of our
decomposition strategy in minimizing the maximum cluster radius,
compared against the recent parallel decomposition strategy of
\cite{MillerPX13}, while a second batch shows that the approximation
obtained by our diameter approximation algorithm is in fact much
smaller than the asymptotic bound, even for graphs of unknown doubling
dimension (less than twice the diameter in all tested cases)
and that the algorithm's performance compares very favorably
to the one exhibited by direct competitors such as breadth-first
search and the (almost exact) diameter estimation algorithm HADI
\cite{KangTAFL11}. For graphs of very large diameter, the speed-up
of our algorithm can be of orders of magnitude.

The rest of the paper is organized as
follows. Section~\ref{previouswork} summarizes relevant previous work
on  graph decomposition, $k$-center clustering, and diameter estimation.
Section~\ref{clustering} presents our novel decomposition
and discusses how it can be employed to approximate the $k$-center problem.
Section~\ref{diameter} presents our decomposition-based algorithm
for diameter approximation. Section~\ref{MRimplementation} analyzes
our  strategies in the MR model of \cite{PietracaprinaPRSU12}.
Section~\ref{experiments} reports on  the experimental results, and, finally,
Section~\ref{conclusions} offers some concluding remarks.

\section{Previous work} \label{previouswork}
Parallel clustering algorithms relevant to this work have been
studied in \cite{Cohen98,BlellochGKMPT11,MillerPX13}. In
\cite{Cohen98} the notion of $(\beta,W)$-cover for a weighted graph
$G$ is introduced, which is essentially a decomposition of the graph
into nondisjoint clusters, where each node is allowed to belong to
$\BO{\beta n^{1/\beta} \log n}$ distinct clusters and for any two
nodes at weighted distance at most $W$ in the graph, there is a cluster
containing both. A $(\beta,W)$-cover is obtained by growing clusters
of decreasing radii from successive batches of centers.  The algorithm
presented in \cite{BlellochGKMPT11} is similar but returns disjoint
clusters and guarantees a bound on the average number of edges between
clusters. In \cite{MillerPX13} an alternative clustering algorithm is
proposed which assigns to each node $u \in V$ a random time shift
$\delta_u$, taken from an exponential distribution with parameter
$\beta$, and grows a cluster centered at $u$ starting at time
$\delta_{\rm max}-\delta_u$, where $\delta_{\rm max}$ is the maximum
shift, unless by that time node $u$ has been already covered by some
other cluster. The authors show that in this fashion the graph is
partitioned into clusters of maximum radius $\BO{(\log n) /\beta}$, with
high probability, while the average number of edges between clusters,
hence the size of the quotient graph, is at most $\BO{\beta m}$.  None
of the above clustering approaches guarantees that the maximum radius
of the returned clusters is (close to) minimum with respect to all
possible decompositions of the graph featuring the same number of
clusters.

The related \emph{metric $k$-center} optimization problem requires
that given a set $V$ of $n$ points in a metric space, a subset $M
\subset V$ of $k$ points be found so to minimize the maximum distance
between any $v \in V$ and $M$. The problem is NP-hard even if
distances satisfy the triangle inequality \cite{Vazirani01}, but
polynomial-time 2-approximation sequential algorithms are known
\cite{Gonzalez85,HochbaumS87}. Recently, a constant-approximation
MapReduce algorithm was developed in \cite{EneIM11} under the
assumption that the distances among all $\BT{n^2}$ pairs of points are
given in input. The problem remains NP-hard even if $V$ represents the
set of nodes of an undirected connected graph $G$, and the distance
between two nodes is the length of the shortest path between them. To
the best of our knowledge, no low-depth linear-space parallel strategy
that yields a provably good approximation for this important graph
variant is known in the literature.

In recent years, efficient sequential algorithms for estimating the
diameter of very large graphs have been devised, which avoid the costly
computation of all-pairs shortest paths or the memory-inefficient
transitive closure by performing a limited number of Breadth-First
Searches (BFS) from suitably selected source nodes
\cite{CrescenziGHLM13,ChechikLRSTW14}.  Unfortunately, due to the
inherent difficulty of parallelizing BFS \cite{UllmanY91,KleinS92}
these approaches do not appear amenable to efficient low-depth
parallel implementations.  External-memory algorithms for diameter
estimation which employ a clustering-based strategy similar to ours
have been recently proposed in \cite{Meyer08,AjwaniMV12}. The
algorithm by \cite{Meyer08} basically selects $k$ centers at random
and grows disjoint clusters around the centers until the whole graph
is covered. The author shows that the diameter of the original graph
can be approximated within a multiplicative factor of $\BT{\sqrt{k}
  \log n}$, with high probability, by computing the diameter on the
quotient graph associated with the clustering with suitable edge
weights.  In \cite{AjwaniMV12} a recursive implementation of this
strategy is evaluated experimentally. {This approximation ratio is
competitive with our result only for polylogarithmic values of $k$. However,
observe that  for such small values of $k$ the radius of the $k$ clusters
  must be within a small (polylogarithmic) factor of the graph diameter, and thus the
  parallel number of rounds cannot be substantially sublinear in the
  diameter itself.}

Efficient PRAM algorithms for approximating shortest path distances
between given pairs of nodes are given in \cite{Cohen98,Cohen00}. For
sparse graphs with $m \in \BT{n}$, these algorithms feature
$\BO{n^\delta}$ depth, for any fixed constant $\delta \in (0,1)$, but
incur a polylogarithmic space blow-up due to the use of the
$(\beta,W)$-covers mentioned above.  The algorithms are rather
involved and communication intensive, hence, while theoretically
efficient, in practice they may run slowly when implemented on
distributed-memory clusters of loosely-coupled servers, where
communication overhead is typically high. Moreover, their depth is not
directly related to the graph diameter.

In \cite{PalmerGF02}, an efficient algorithm, called ANF, is devised
to tightly approximate the \emph{neighborhood function} of a graph
$G$, which, for every $t \geq 0$, gives the number of pairs of nodes
at distance at most $t$ in $G$, and, therefore, it can be used to
estimate the diameter.  On a connected graph $G$ of diameter $\Delta$,
ANF executes $\Delta$ iterations and maintains at each node $v$ a
suitable succinct data structure to approximate, at the end of each
Iteration $t$, the number of nodes at distance at most $t$ from $v$.
A MapReduce implementation of ANF, called HADI, has been devised in
\cite{KangTAFL11} using Apache Hadoop. Little experimental evidence of HADI's
performance on large benchmark graphs is available. However, as
confirmed by our experiments (see Section~\ref{experiments}), for
large-diameter graphs HADI's strategy, even if implemented using
faster engines than Hadoop, runs very slowly because of the large
number of rounds and the high communication volume.  A very fast,
multithreaded version of ANF, called HyperANF, has been devised in
\cite{BoldiRV11} for expensive tightly-coupled multiprocessors with
large shared memories, which are not the architectures targeted by our
work.

\section{Clustering algorithm} \label{clustering}

Let $G=(V,E)$ be an undirected connected graph with $n=|V|$ nodes and
$m=|E|$ edges. For any two nodes $u,v \in V$ let $\mbox{dist}(u,v)$
denote the number of edges in the shortest path between $u$ and $v$ in $G$. Also, for any $u \in V$ and
$X \subseteq V$, let $\mbox{dist}(u,X)$ denote the minimum distance
between $u$ and a node of $X$. We now present an algorithm ({\tt
  CLUSTER}) that partitions $V$ into disjoint clusters around suitably
selected nodes called \emph{centers}, so that the radius of each
cluster, defined as the maximum distance of a cluster node from the
center, is small.  As in \cite{Cohen98,BlellochGKMPT11,MillerPX13},
our algorithm activates batches of centers progressively, so to allow
more clusters to cover sparser regions of the graph.  However, unlike
those previous works, we can show that our algorithm minimizes the
maximum cluster radius, within a polylogarithmic factor, a property
that later will turn out crucial for the efficiency of the
diameter-approximation algorithm. A parameter $\tau >0$ is used to
control the size of each batch of activated centers. When two or more
clusters attempt to cover a node concurrently, only one of them,
arbitrarily chosen, succeeds, so to maintain clusters disjoint.  The
algorithm's pseudocode is given below\footnote{Unless explicitly
  indicated, the base of all logarithms is 2.}.

\begin{algorithm}[h]
  \caption{{\tt CLUSTER}$(\tau)$}
\label{alg:cluster}
\begin{small}
\DontPrintSemicolon
\Begin{
\Let{$C$}{$\emptyset$} // {\it (current set of clusters)} \;
\Let{$V'$}{$\emptyset$} // {\it (nodes covered by current set of clusters)} \;
\While{$|V-V'| \geq 8 \tau \log n$}{
Select each node of $V-V'$ as a new center \\
\hspace*{0.3cm} independently with probability $4 \tau \log n/|V-V'|$ \;
Add to $C$ the set of singleton clusters centered at the \\
\hspace*{0.3cm} selected nodes \;
In parallel grow all clusters of $C$ disjointly until \\
\hspace*{0.3cm} $\geq |V-V'|/2$ new nodes are covered \;
\Let{$V'$}{nodes covered by the clusters in $C$} \;
}
{\bf return} $C \cup  \{
             \mbox{singleton clusters centered at the nodes of } V-V' \}$
  }
\end{small}
\end{algorithm}
We define a crucial benchmark for analyzing each iteration of the
algorithm.
\begin{definition}
  Let $k$ be an integer, with $1 \leq k \leq n$, and let $V' \subseteq
  V$ be a subset of at most $n-k$ nodes. We define
  \begin{eqnarray*}
  r(G,V',k)  & = & \min\{r \; : \; \exists \; U \subseteq V-V' \mbox{
    s.t. } |U|=k \;\; \\
    & &  \; \; \; \; \; \wedge  \forall w \in V-V' \;\;
  \dist(w,V'\cup U) \leq r \}.
  \end{eqnarray*}
\end{definition}
Suppose that we have partially grown some clusters covering a subset
$V' \subset V$. We know that by continuing to grow these clusters plus
$k$ new clusters centered in uncovered nodes, $r=r(G,V',k)$ growing
steps are necessary to cover the whole graph.
\noindent
We have:
\begin{theorem}\label{thm1}
  \sloppy For any integer $\tau > 0$, with high probability, {\tt
    CLUSTER}$(\tau)$ computes a partition of $V$ into
  $
  \BO{\tau \cdot \log^2 n}
  $
  disjoint clusters of maximum radius
  \[
  R_{\rm ALG} = \BO{\sum_{i=1}^{\ell} R(i,\tau)}
  \]
  where $\ell = \lceil \log(n/(8\tau \log n)) \rceil$, and
  \[
  R(i,\tau) = \max \{r(G,V',\tau) :
  V' \subset V \wedge |V-V'|=n/2^{i-1} \}
  \]
  for $1 \leq i \leq \ell$.
\end{theorem}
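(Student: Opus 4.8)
The plan is to analyze the algorithm iteration by iteration, bounding separately the number of clusters and the total radius growth, and only at the end take a union bound over all iterations to obtain the high-probability guarantee. First I would handle the cluster count. Since each iteration of the while loop terminates only after at least half of the currently uncovered nodes have been covered, the quantity $|V-V'|$ at least halves per iteration; starting from $n$ and stopping as soon as $|V-V'| < 8\tau\log n$, there are at most $\ell=\lceil\log(n/(8\tau\log n))\rceil$ iterations. In the iteration with $N=|V-V'|$ uncovered nodes, the number of newly activated centers is binomial with mean $pN = 4\tau\log n$ (where $p = 4\tau\log n/N$), so a Chernoff bound shows it is $\BO{\tau\log n}$ with high probability. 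Adding the at most $8\tau\log n$ final singleton clusters and taking a union bound over the $\ell=\BO{\log n}$ iterations yields $\BO{\tau\log^2 n}$ clusters overall.

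The heart of the argument is a covering lemma bounding the number of growing steps in a single iteration. Fix the covered set $V'$ at the start of an iteration, let $N=|V-V'|$, set $r=r(G,V',\tau)$, and let $U^*=\{u_1,\dots,u_\tau\}\subseteq V-V'$ be a witness achieving this value, so every uncovered node lies within distance $r$ of $V'\cup U^*$. The key geometric observation is a doubling-by-triangle-inequality trick: if the algorithm activates \emph{any} uncovered node inside the ball $B(u_j,r)$ as a center, then after $2r$ growing steps that center's cluster covers every uncovered node of $B(u_j,r)$, since any two nodes of the ball are at distance at most $2r$. I would then split the $\tau$ optimal balls into \emph{heavy} ones, containing at least $N/(4\tau)$ uncovered nodes, and \emph{light} ones. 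Each heavy ball is hit by at least one selected center with probability at least $1-(1-p)^{N/(4\tau)}\ge 1-e^{-\log n}$, so a union bound over the at most $\tau$ heavy balls shows all of them are hit with high probability, while the light balls jointly contain fewer than $\tau\cdot N/(4\tau)=N/4$ uncovered nodes. Since every uncovered node is either within distance $r$ of $V'$ (hence covered by growing the existing clusters $2r\ge r$ steps) or lies in some optimal ball, growing all clusters for $2r$ steps covers all but the light-ball nodes, i.e.\ at least $3N/4\ge N/2$ uncovered nodes. Hence, with high probability, the growing phase of this iteration halts within $2\,r(G,V',\tau)$ steps.

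Next I would relate the per-iteration bound $r(G,V',\tau)$ to the benchmark $R(i,\tau)$. This rests on a monotonicity property: if $V'\subseteq V''$ then $r(G,V'',\tau)\le r(G,V',\tau)$, because any witness set for $V'$ converts into one for $V''$ by discarding the witness nodes that $V''$ already covers and padding with arbitrary uncovered nodes (extra centers only shrink distances). At the start of iteration $i$ we have $|V-V'|\le n/2^{i-1}$ by the halving property; choosing $W\subseteq V'$ with $|V-W|=n/2^{i-1}$ and applying monotonicity gives $r(G,V',\tau)\le r(G,W,\tau)\le R(i,\tau)$, so iteration $i$ performs at most $2R(i,\tau)$ growing steps with high probability.

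Finally I would assemble the radius bound. Because every batch of clusters keeps growing in all subsequent iterations, a cluster created in iteration $i$ has its radius increased by at most the number of growing steps of each iteration $j\ge i$, so its final radius is at most $\sum_{j=i}^{\ell}2R(j,\tau)\le 2\sum_{i=1}^{\ell}R(i,\tau)$, and the final singleton clusters have radius $0$. A union bound over the $\ell=\BO{\log n}$ iterations makes all the per-iteration high-probability events hold simultaneously, giving $R_{\rm ALG}=\BO{\sum_{i=1}^{\ell}R(i,\tau)}$ together with the claimed cluster count. I expect the main obstacle to be the covering lemma of the second paragraph, in particular the choice of the heavy/light threshold and the tuning of the constants hidden in the sampling probability $p$ and in the stopping threshold $8\tau\log n$, so that the per-ball failure probabilities, after union bounds over balls and over iterations, still sum to $o(1)$.
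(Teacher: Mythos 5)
Your proposal is correct and follows essentially the same route as the paper's proof: a per-iteration analysis against the benchmark $r(G,V_i,\tau)$, a heavy/light split of the witness regions, a hitting argument showing every heavy region receives a new center with high probability (hence at least half of the uncovered nodes are covered within $O(r(G,V_i,\tau))$ growing steps), the reduction $r(G,V_i,\tau)\le R(i,\tau)$, and union bounds over regions and iterations. The differences are cosmetic: you use overlapping balls $B(u_j,r)$ with an explicit triangle-inequality factor $2$ where the paper uses a Voronoi-style partition $B_0,\dots,B_\tau$ (and silently absorbs that factor), and you spell out the monotonicity of $r(G,\cdot,\tau)$ under set inclusion, which the paper invokes implicitly in its chain of inequalities bounding $R_i$ by $R(i,\tau)$.
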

\begin{proof}
  %
  The bound on the number of clusters follows by observing that the
  number of clusters added in each iteration is a binomial random
  variable with expectation $4 \tau \log n$, and that at most $\ell =
  \BO{\log n}$ iterations are executed overall.
  As for the upper bound on $R_{\rm ALG}$, it is
  sufficient to show that in the $i$th iteration, with $i \geq 1$, the
  radius of each cluster (new or old) grows by $\BO{R(i, \tau)}$, with
  high probability.  Let $V_i$ be the set of nodes that at the
  beginning of Iteration $i$ are already covered by the existing
  clusters. By construction, we have that $|V_1| = 0$ and, for $i >
  1$, $|V_i| \geq \sum_{j=1}^{i-1} n/2^j$.  Hence, $|V-V_i| \leq
  n/2^{i-1}$.  Let $R_i = r(G,V_i,\tau)$. It is easy to verify that
  %
  %
  \begin{eqnarray*}
  R_i & \leq & \max_{V' \subset V, |V-V'|=|V-V_i|} r(G,V',\tau)\\
  & \leq & \max_{V' \subset V, |V-V'|=n/2^{i-1}} r(G,V',\tau) \\
  & = & R(i,\tau).
  \end{eqnarray*}
  By definition of $r(G,V_i,\tau)$ we know that
  there must exist $\tau$ nodes, say $u_1, u_2, \ldots, u_{\tau} \in
  V-V_i$, such that each node of $V-V_i$ is at distance at most $R_i$
  from either $V_i$ or one of these nodes. Let us consider the
  partition
  \[
  V-V_i = B_0 \cup B_1 \cup \cdots \cup B_{\tau}
  \]
  where $B_0$ is the set of nodes of $V-V_i$ which are closer to $V_i$
  than to any of the $u_j$'s, while $B_j$ is the set of nodes of
  $V-V_i$ which are closer to $u_j$ than to $V_i$ or to any other
  $u_{j'}$. Let
  \[
  J = \{j \geq 1 \; : \; |B_j| \leq |V-V_i|/(2\tau)\}
  \]
  and note that
  \[
  \sum_{j \in J} |B_j| \leq \tau (|V-V_i|/(2\tau))\leq |V-V_i|/2
  \]
  Therefore, we have that $\sum_{j \not\in J} |B_j|
  \geq |V-V_i|/2$.  Since $|V-V_i| \geq 8 \tau \log n$, it is easy to
  see that for any $j \geq 1$ and $j \not\in J$, in the $i$th
  iteration a new center will be chosen from $B_j$ with probability at
  least $1-1/n^2$. Hence, by the union bound, we conclude that a new
  center will fall in every $B_j$ with $j \geq 1$ and $j \not\in J$,
  with probability at least $1-\tau/n^2$. When this event occurs, $R_i
  \leq R(i, \tau)$ cluster growing steps will be sufficient to reach
  half of the nodes of $V-V_i$ (namely, the nodes belonging to $B_0
  \cup (\cup_{j \not\in J} B_j)$. The theorem follows by applying the
  union bound over the $\ell \leq \log n$ iterations.
\end{proof}

An important issue, which is crucial to assess the efficiency of the
diameter approximation algorithm discussed in the next section, is to
establish how much smaller is the maximum radius $R_{\rm ALG}$ of the clusters
returned by {\tt CLUSTER} with respect to the graph diameter
$\Delta$, which is an obvious upper bound to $R_{\rm ALG}$.
Our analysis will express the relation between  $R_{\rm ALG}$ and
$\Delta$ as a function of the \emph{doubling
  dimension} of the graph, a concept that a number of recent works
have shown to be useful in relating algorithms' performance to graph
properties~\cite{AbrahamCGP10}.

\begin{definition} \label{doublingdim} Consider an undirected graph
  $G=(V,E)$.  The \emph{ball of radius $R$} centered at node $v$ is
  the set of nodes at distance at most $R$ from $v$. Also, the
  \emph{doubling dimension} of $G$ is the smallest integer $b > 0$
  such that for any $R >0$, any ball of radius $2R$ can be covered by
  at most $2^b$ balls of radius $R$.
\end{definition}
The following lemma provides an upper bound on $R_{\rm ALG}$ in terms
of the doubling dimension and of the diameter of the graph $G$.
\begin{lemma} \label{radius-bound} Let $G$ be a connected $n$-node
  graph with doubling dimension $b$ and diameter $\Delta$. For $\tau
  \in \BO{n/\log^2 n}$, with high probability, {\tt CLUSTER}$(\tau)$
  computes a partition of $V$ into $\BO{\tau \cdot \log^2 n}$ disjoint
  clusters of maximum radius
  \[
  R_{\rm ALG} = \BO{\left\lceil {\Delta \over \tau^{1/b}}
    \right\rceil \log n}.
  \]
\end{lemma}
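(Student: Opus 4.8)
The plan is to combine the structural bound $R_{\rm ALG} = \BO{\sum_{i=1}^{\ell} R(i,\tau)}$ furnished by Theorem~\ref{thm1} with a uniform, $i$-independent estimate of each $R(i,\tau)$ obtained from the doubling dimension. Since $\ell = \lceil \log(n/(8\tau\log n)) \rceil = \BO{\log n}$, it suffices to prove that $R(i,\tau) = \BO{\lceil \Delta/\tau^{1/b}\rceil}$ for every $i$; the extra $\log n$ factor appearing in the target bound then comes for free upon summing over the $\ell$ iterations.

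First I would collapse every $R(i,\tau)$ onto the single quantity $r(G,\emptyset,\tau)$, the minimum radius needed to cover the \emph{whole} graph with $\tau$ centers. The claim is the monotonicity inequality $r(G,V',\tau) \le r(G,\emptyset,\tau)$ for every admissible $V'$, which immediately gives $R(i,\tau) \le r(G,\emptyset,\tau)$ for all $i$. To establish it I would take an optimal $\tau$-center set $U^{*}$ for the entire graph and project it onto $V-V'$: any center of $U^{*}$ that happens to lie in $V'$ is simply discarded, since every node it used to cover is then within the same distance of $V'$ itself, while the surviving centers, padded arbitrarily with nodes of $V-V'$ to size $\tau$ (there is room, as $|V-V'| \ge 8\tau\log n \ge \tau$), continue to cover the rest. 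This step is a routine case analysis and should present no difficulty.

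The crux of the argument is the doubling-dimension estimate $r(G,\emptyset,\tau) = \BO{\lceil \Delta/\tau^{1/b}\rceil}$. Here I would regard the connected, diameter-$\Delta$ graph as a single ball of radius $\Delta$ and apply Definition~\ref{doublingdim} iteratively: after $t$ halving steps the ball is covered by at most $2^{tb}$ balls of radius $\Delta/2^{t}$. Choosing $t = \lfloor (\log \tau)/b \rfloor$ keeps the number of covering balls at $2^{tb} \le \tau$ while forcing $2^{t} > \tau^{1/b}/2$, so the covering radius is below $2\Delta/\tau^{1/b}$; taking the centers of these balls as the set $U$ and rounding the radius up to the integer $\lceil 2\Delta/\tau^{1/b}\rceil = \BO{\lceil \Delta/\tau^{1/b}\rceil}$ yields the bound (using $\lceil 2x\rceil \le 2\lceil x\rceil$). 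The one delicate point is the regime $\tau > \Delta^{b}$, where $\Delta/\tau^{1/b} < 1$: there the same choice of $t$ drives the real radius below $2$, so the integer covering radius is $\BO{1} = \BO{\lceil \Delta/\tau^{1/b}\rceil}$, matching the ceiling in the statement. I regard this covering argument, and in particular the rounding behaviour near $\Delta/\tau^{1/b} \approx 1$, as the only substantive part of the proof.

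Finally I would assemble the pieces as $R_{\rm ALG} = \BO{\sum_{i=1}^{\ell} R(i,\tau)} \le \BO{\ell \cdot r(G,\emptyset,\tau)} = \BO{\log n \cdot \lceil \Delta/\tau^{1/b}\rceil}$, with the last equality invoking $\ell = \BO{\log n}$, while the bound on the number of clusters is inherited verbatim from Theorem~\ref{thm1}. The hypothesis $\tau \in \BO{n/\log^{2} n}$ serves only to guarantee that {\tt CLUSTER}$(\tau)$ is well defined and that Theorem~\ref{thm1} applies; it plays no role in the radius estimate itself. The monotonicity reduction and the final summation are mechanical, so essentially all of the work is concentrated in the iterated doubling step.
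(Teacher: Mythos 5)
Your proposal is correct and follows essentially the same route as the paper's own proof: you bound each $R(i,\tau)$ by the optimal $\tau$-center radius $R_{\rm OPT}(\tau) = r(G,\emptyset,\tau)$, bound that quantity by $\BO{\lceil \Delta/\tau^{1/b}\rceil}$ through iterated application of the doubling-dimension definition, and pick up the extra $\log n$ factor by summing over the $\ell = \BO{\log n}$ iterations, exactly as the paper does. The only difference is that you flesh out the two steps the paper dismisses as easy, namely the discard-and-pad argument showing $r(G,V',\tau) \leq r(G,\emptyset,\tau)$ and the rounding behaviour when $\Delta/\tau^{1/b} < 1$.
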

\begin{proof}
Let $R_{\rm OPT}(\tau)$ be the smallest maximum radius
achievable by any decomposition into $\tau$ clusters.  It is easy to see that each $R(i,\tau)$ is a lower bound
to $R_{\rm OPT}(\tau)$, whence $R_{\rm ALG} =  \BO{\sum_{i=1}^{\ell} R(i,\tau)} = \BO{R_{\rm OPT}(\tau) \log n}$.
By iterating the definition of doubling dimension starting from a single ball of radius
$\Delta$ containing the whole graph, one can easily argue that
$G$ can be decomposed into (at most) $\tau$ disjoint clusters of radius
$R = \BO{\lceil \Delta/\tau^{1/b}\rceil}$. The bound on $R_{\rm ALG}$ follows
since $R = \BOM{R_{\rm OPT}(\tau)}$.
\end{proof}
Observe that for graphs with diameter $\Delta = \omega(\log n)$ and
low (e.g., constant) doubling dimension, $R_{\rm ALG}$ becomes
$o(\Delta)$ when $\tau$ is large enough. Indeed, some experimental
work \cite{Karger02} reported that, in practice, big data networks of
interest have low doubling dimension. Also, for applications such as
the diameter estimation discussed in the next section, it is
conceivable that parameter $\tau$ be made as large as $n^\epsilon$,
for some constant $\epsilon > 0$. In fact, the gap between the graph
diameter and $R_{\rm ALG}$ can be even more substantial for irregular
graphs where highly-connected regions and sparsely-connected ones
coexist.  For example, let $G$ consist of a constant-degree expander
of $n-\sqrt{n}$ nodes attached to a path of $\sqrt{n}$ nodes, and set
$\tau = \sqrt{n}$.  It is easy to see that $R(i,\tau) =
\BO{\mbox{poly}(\log n)}$, for $i \geq 1$.  Hence, {\tt
  CLUSTER}$(\tau)$ returns $\sqrt{n} \log^2 n$ clusters of maximum
radius $R_{\rm ALG}=\BO{\mbox{poly}(\log n)}$, which is exponentially
smaller than the $\BOM{\sqrt{n}}$ graph diameter.

\subsection{Approximation to $k$-center}
Algorithm {\tt CLUSTER} can be employed to compute an approximate
solution to the $k$-center problem, defined as follows.  Given an
undirected connected graph $G = (V,E)$ with unit edge weights, a set
$M \subseteq V$ of $k$ \emph{centers} is sought which minimizes the
maximum distance $R_{\rm OPT}(k)$ in $G$ of any node $v \in V$ from
$M$. As mentioned in Section~\ref{previouswork} this problem is
NP-hard. The theorem below states our approximation result.

\begin{theorem} \label{k-center} For $k = \BOM{\log^2 n}$, algorithm
  {\tt CLUSTER} can be employed to yield a $\BO{\log^3
    n}$-approximation to the $k$-center problem with unit edge
  weights, with high probability.
\end{theorem}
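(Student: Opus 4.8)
The plan is to run {\tt CLUSTER}$(\tau)$ with an appropriately chosen value of $\tau$ (polynomially related to $k$) and then show that the resulting partition can be post-processed into a $k$-center solution whose maximum radius is within an $\BO{\log^3 n}$ factor of $R_{\rm OPT}(k)$. The guiding observation, already established in Theorem~\ref{thm1} and exploited in the proof of Lemma~\ref{radius-bound}, is that each benchmark quantity $R(i,\tau)$ is a lower bound on $R_{\rm OPT}(\tau)$, the optimal maximum radius achievable with $\tau$ clusters. Since $R_{\rm ALG} = \BO{\sum_{i=1}^{\ell} R(i,\tau)} = \BO{R_{\rm OPT}(\tau)\log n}$, the algorithm already gives an $\BO{\log n}$-approximation to the optimal $\tau$-clustering radius. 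So the real work is to relate $R_{\rm OPT}(\tau)$ back to $R_{\rm OPT}(k)$ and to control the number of centers.

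First I would address the center count. By Theorem~\ref{thm1}, {\tt CLUSTER}$(\tau)$ returns $\BO{\tau\log^2 n}$ clusters with high probability, which is too many if we want exactly $k$ centers. The natural fix is to set $\tau$ so that $\tau\log^2 n = \BO{k}$, i.e.\ choose $\tau = \BT{k/\log^2 n}$; the hypothesis $k = \BOM{\log^2 n}$ guarantees that $\tau \geq 1$ is a valid integer parameter. With this choice the algorithm produces $\BO{k}$ clusters, and selecting the cluster centers as the $k$-center solution yields a feasible set of $\BO{k}$ centers (a constant-factor excess in the number of centers is standard in $k$-center approximation and can be absorbed into the approximation factor, or handled by a final merging step).

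Next I would bound the radius. The maximum distance of any node from its chosen center is exactly $R_{\rm ALG}$, so I must show $R_{\rm ALG} = \BO{R_{\rm OPT}(k)\log^3 n}$. Combining $R_{\rm ALG} = \BO{R_{\rm OPT}(\tau)\log n}$ with the relation between $R_{\rm OPT}(\tau)$ and $R_{\rm OPT}(k)$ is the crux. Since $\tau = \BT{k/\log^2 n} \leq k$, a coarser clustering is at least as hard, so monotonicity gives $R_{\rm OPT}(\tau) \geq R_{\rm OPT}(k)$ in the wrong direction; the needed inequality is an \emph{upper} bound $R_{\rm OPT}(\tau) = \BO{R_{\rm OPT}(k)\log^2 n}$. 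I expect this to follow from a doubling-type or covering argument: given an optimal $k$-center solution of radius $R_{\rm OPT}(k)$, one shows that $\tau = \BT{k/\log^2 n}$ balls of radius $\BO{R_{\rm OPT}(k)\log^2 n}$ suffice to cover the same ground, by grouping the $k$ optimal balls into $\tau$ super-clusters each of bounded diameter. This is the main obstacle: making the loss in radius incurred when reducing the number of centers from $k$ to $\tau$ be only a polylogarithmic factor.

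The hard part will therefore be quantifying precisely the radius blow-up in passing from a $k$-clustering to a $\tau = \BT{k/\log^2 n}$ clustering, and verifying that it is $\BO{\log^2 n}$ rather than something larger. I would handle this by bounding how many optimal balls must be absorbed per super-cluster (roughly $k/\tau = \BT{\log^2 n}$) and arguing that merging that many adjacent balls inflates the radius by at most the same factor, so that the three logarithmic factors — one from $R_{\rm ALG}=\BO{R_{\rm OPT}(\tau)\log n}$ and two from $R_{\rm OPT}(\tau) = \BO{R_{\rm OPT}(k)\log^2 n}$ — compose to give the claimed $\BO{\log^3 n}$-approximation, with the high-probability qualifier inherited directly from Theorem~\ref{thm1}.
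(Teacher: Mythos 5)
Your plan follows the paper's own proof almost step for step: the same choice $\tau = \BT{k/\log^2 n}$, the same appeal to $R_{\rm ALG} = \BO{R_{\rm OPT}(\tau)\log n}$ (from the proof of Lemma~\ref{radius-bound}), and the same crux, namely $R_{\rm OPT}(\tau) = \BO{R_{\rm OPT}(k)\log^2 n}$. But on that crux you stop at ``group the $k$ optimal balls into $\tau$ super-clusters of adjacent balls and argue the radius inflates by roughly the group size'' --- and the \emph{existence} of such a grouping is precisely what needs proof: an arbitrary partition of the optimal balls into groups of size $\BT{\log^2 n}$ gives no bound whatsoever on the diameter of a group, since the balls in a group could lie in distant parts of $G$. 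The paper supplies the missing mechanism: because $G$ is connected, the quotient graph of the optimal $k$-clustering is connected; take a spanning tree $T$ of it (a tree on $k$ nodes) and decompose $T$ into $\tau$ subtrees with $\BO{k/\tau} = \BO{\log^2 n}$ nodes each, hence of height $\BO{\log^2 n}$; merging the clusters of each subtree yields $\tau$ connected super-clusters in which every node is within distance $\BO{R_{\rm OPT}(k)\log^2 n}$ of any chosen center, because consecutive clusters along a tree path are adjacent in $G$. Note also that your first instinct, a ``doubling-type'' argument, is a red herring: Theorem~\ref{k-center} makes no assumption on the doubling dimension, so the tree-partition argument (or something equally assumption-free) is genuinely needed here.

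A secondary but real flaw is your treatment of the number of centers. You let the algorithm output $\BO{k}$ centers and assert that a constant-factor excess ``can be absorbed into the approximation factor.'' It cannot: a feasible $k$-center solution must use at most $k$ centers, and no radius blow-up compensates for violating that cardinality constraint. The paper instead tunes the constant in $\tau = \BT{k/\log^2 n}$ so that, by the cluster-count bound of Theorem~\ref{thm1}, the algorithm returns \emph{at most} $k$ clusters with high probability, and pads with arbitrary extra nodes when fewer than $k$ are returned (padding can only decrease the objective). Your fallback of ``a final merging step'' could be made to work via the same spanning-tree trick, but it again relies on the connectivity argument that your proposal leaves unproven.
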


\begin{proof}
  Fix $\tau = \BT{k/\log^2 n}$ so that our algorithm returns at most
  $k$ clusters with high probability, and let $M$ be the set of
  centers of the returned clusters. Without loss of generality, we
  assume that $M$ contains exactly $k$ nodes (in case $|M| < k$, we
  can add $k-|M|$ arbitrary nodes to $M$, which will not increase the
  value of objective function). Let $R_{\rm ALG}$ be the maximum
  radius of the clusters returned by our algorithm. As proved in
  Lemma~\ref{radius-bound}, we have that, with high probability
  $R_{\rm ALG} = \BO{R_{\rm OPT}(\tau)
  \log n}$. We conclude the proof by arguing that $R_{\rm OPT}(\tau)
=\BO{R_{\rm OPT}(k) \log^2 n}$. Consider the optimal solution to the
$k$-center problem on the graph, and the associated
clustering of radius $R_{\rm OPT}(k)$. Let $T$ be a spanning tree of
the quotient graph associated with this clustering.  It is easy to see
that $T$ can be decomposed into $\tau$ subtrees of height $\BO{\log^2
  n}$ each. Merge the clusters associated with the nodes of each such
subtree into one cluster and pick any node as center of the merged
cluster.  It is easy to see that every node in the graph is at
distance $D = \BO{R_{\rm OPT}(k) \log^2 n}$ from one of the picked
nodes.  Since $D \geq R_{\rm OPT}(\tau)$, we conclude that $R_{\rm
  OPT}(\tau) =\BO{R_{\rm OPT}(k) \log^2 n}$, and the theorem follows.
\end{proof}

\subsection{Extension to disconnected graphs}
Let $G$ be an $n$-node graph with $h>1$ connected components. It is
easy to see that for any $\tau \geq h$, algorithm {\tt
  CLUSTER$(\tau)$} works correctly with the same guarantees stated in
Theorem~\ref{thm1}.  Also, observe that for $k \geq h$, the $k$-center
problem still admits a solution with noninfinite radius. Given $k \geq
h$, we can still get a $\BO{\log^3 n}$-approximation to $k$-center on
$G$ as follows. If $k = \BOM{h \log^2n}$ we simply run {\tt
  CLUSTER$(\tau)$} with $\tau = \BT{k / \log^2 n}$ as before. If
instead $h \leq k = o(h \log^2n)$ we run {\tt CLUSTER$(h)$} and then
reduce the number of clusters $W = \BO{h \log^2 n}$ returned by the
algorithm to $k$ by using the merging technique described in the proof
of Theorem~\ref{k-center}. It is easy to show that the approximation ratio is
still $\BO{\log^3 n}$.

\section{Diameter estimation} \label{diameter}

Let $G$ be an $n$-node connected graph.  As in  \cite{Meyer08},
we approximate the diameter of $G$ through the diameter of the
quotient graph associated with a suitable clustering of $G$. For the
distributed implementation discussed in the next section, the
clustering will be made sufficiently coarse so that the diameter of
the quotient graph can be computed on a single machine. In order to
ensure a small approximation ratio, we need a refined clustering
algorithm ({\tt CLUSTER2}), whose pseudocode is given in Algorithm~\ref{alg:cluster2}, which
imposes a lower bound on the number of growing steps applied to each
cluster, where such a number is precomputed using the clustering
algorithm from Section~\ref{clustering}.
\begin{algorithm}[h]
\caption{{\tt CLUSTER2}$(\tau)$}
\label{alg:cluster2}
\begin{small}
\DontPrintSemicolon
\Begin{
Run {\tt CLUSTER}$(\tau)$ and let $R_{\rm ALG}$ be the maximum radius of \\
\hspace*{0.3cm} the returned clusters \;
\Let{$C$}{$\emptyset$} // {\it (current set of clusters)} \;
\Let{$V'$}{$\emptyset$} // {\it (nodes covered by current set of clusters)} \;
\For{$i\leftarrow 1$ \KwTo $\log n$}{
Select each node of $V-V'$ as a new center \\
\hspace*{0.3cm} independently with probability $2^{i}/n$\;
Add to $C$ the set of singleton clusters centered at the \\
\hspace*{0.3cm} selected nodes \;
In parallel grow all clusters of $C$ disjointly for \\
\hspace*{0.3cm} $2R_{\rm ALG}$ steps \;
\Let{$V'$}{nodes covered by the clusters in $C$}
}
{\bf return} C\\
}
\end{small}
\end{algorithm}

 Let $\tau >0$ be an integral parameter. We have:
\begin{lemma}\label{correctness}
  \sloppy
  For any integer $\tau > 0$, with high probability algorithm {\tt
    CLUSTER2}$(\tau)$ computes a partition of $V$ into $\BO{\tau
    \log^4n}$ clusters of radius $R_{\rm ALG2} \leq 2R_{\rm ALG}\log
  n$, where $R_{\rm ALG}$ is the maximum radius of a cluster returned
  by {\tt CLUSTER}$(\tau)$.
\end{lemma}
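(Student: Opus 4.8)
The plan is to verify three separate facts about {\tt CLUSTER2}$(\tau)$ --- the radius bound, full coverage, and the bound on the number of clusters --- of which only the last is nontrivial.

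First, the radius bound $R_{\rm ALG2} \leq 2R_{\rm ALG}\log n$ is deterministic: a cluster created in iteration $i$ takes part in the growing phase of iterations $i, i+1, \ldots, \log n$, and each such phase extends its radius by at most $2R_{\rm ALG}$, so its final radius is at most $2R_{\rm ALG}(\log n - i + 1) \leq 2R_{\rm ALG}\log n$. Coverage is likewise immediate: in the final iteration $i=\log n$ the selection probability $2^{\log n}/n$ is at least $1$, so every still-uncovered node becomes a center and is therefore covered; hence $C$ is always a partition of $V$, independently of the earlier random choices.

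The heart of the proof is bounding the total number of selected centers (equivalently, the number of clusters) by $\BO{\tau\log^4 n}$. The idea is to use the clusters produced by the preliminary call to {\tt CLUSTER}$(\tau)$ as \emph{landmarks}. Let $C_1,\ldots,C_p$ be those clusters, with centers $c_1,\ldots,c_p$ and radii at most $R_{\rm ALG}$; by Theorem~\ref{thm1} we have $p=\BO{\tau\log^2 n}$ with high probability, and the landmarks partition $V$. The key step is a coverage claim: if during some iteration {\tt CLUSTER2} picks a center $v$ lying in a landmark $C_j$, then every $w\in C_j$ satisfies $\dist(w,v)\leq \dist(w,c_j)+\dist(c_j,v)\leq 2R_{\rm ALG}$, so the $2R_{\rm ALG}$ growing steps of that iteration cover all of $C_j$ (the disjoint multi-source growth covers every node within the grown radius of some active center, disjointness affecting only cluster membership and not coverage). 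Consequently, after any iteration every landmark that received a center is entirely covered, and all uncovered nodes lie in landmarks not yet hit.

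I would then control $|U_i|$, the number of uncovered nodes at the start of iteration $i$, as follows. Since each uncovered node is chosen independently with probability $2^i/n$, a landmark retaining at least $2n\ln n/2^i$ uncovered nodes misses a center with probability at most $(1-2^i/n)^{2n\ln n/2^i}\leq n^{-2}$; a union bound over the $p\leq n$ landmarks and the $\log n$ iterations shows that, with high probability, in every iteration all landmarks with at least $2n\ln n/2^i$ uncovered nodes are hit, hence covered. Conditioning on this event, after iteration $i$ the uncovered nodes lie in the at most $p$ landmarks each retaining fewer than $2n\ln n/2^i$ such nodes, so $|U_{i+1}|\leq p\cdot 2n\ln n/2^i=\BO{\tau n\log^3 n/2^i}$. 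The number of centers selected in iteration $i$ is then binomial with mean $|U_i|\cdot 2^i/n=\BO{\tau\log^3 n}$, which a Chernoff bound (applied conditionally on the history through iteration $i-1$) concentrates to $\BO{\tau\log^3 n}$ with high probability; summing over the $\log n$ iterations gives the claimed $\BO{\tau\log^4 n}$ bound. I expect the main obstacle to be making the coverage claim fully rigorous under disjoint, staged growth --- in particular, checking that already-covered nodes on shortest paths do not prevent a landmark from being absorbed within a single $2R_{\rm ALG}$-step phase --- together with carefully handling the cross-iteration probabilistic dependencies when combining the per-iteration concentration bounds.
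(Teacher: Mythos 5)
Your proof is correct, but its central counting argument takes a genuinely different route from the paper's. The paper never uses the actual clusters of {\tt CLUSTER}$(\tau)$ as regions: writing $W=\BO{\tau\log^2 n}$ for their number, at each iteration it re-invokes the benchmark $r(G,V_i,W)$ (exactly as in the proof of Theorem~\ref{thm1}) to obtain $W$ virtual centers, partitions the currently uncovered nodes into Voronoi-style blocks $B_0,B_1,\ldots,B_W$ around them, shows that every heavy block receives a center, and deduces a halving invariant: after iteration $H+i$ at most $n/2^i$ nodes remain uncovered, where $H$ is a threshold index chosen so that the selection probability $2^H/n$ is at least $\Theta(W\log n / n)$. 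The $\BO{\tau\log^4 n}$ cluster count then follows by bounding the expected number of selections per iteration under these halving events. You instead freeze the regions once and for all as the landmarks returned by the preliminary {\tt CLUSTER}$(\tau)$ run, observe via the triangle inequality that a landmark, once hit by a new center, is absorbed within a single $2R_{\rm ALG}$-step phase, and directly bound the uncovered count by $p\cdot 2n\ln n/2^i$, which caps the binomial mean at $\BO{\tau\log^3 n}$ in every iteration. Both arguments rest on the same geometric insight (a center landing in a region of radius $\leq R_{\rm ALG}$ is within $2R_{\rm ALG}$ of that whole region), but yours dispenses with the benchmark machinery and the $H$/halving bookkeeping, and it makes explicit two facts the paper treats as immediate: the radius bound and full coverage via the probability-one final iteration. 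What the paper's formulation buys is structural parallelism with the analysis of Theorem~\ref{thm1}, reusing the same $B_j$-partition lemma. Finally, the obstacle you flag --- already-covered nodes possibly blocking absorption of a landmark --- is not an issue: a standard induction along a path from the new center shows that after $s$ joint growing steps every node within distance $s$ of any active center is covered by \emph{some} cluster, and coverage (not ownership) is all your argument needs.
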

\begin{proof}
  The bound on $R_{\rm ALG2}$ is immediate.  Let $W$ be the number of
  clusters returned by the execution of {\tt CLUSTER}$(\tau)$ within
  {\tt CLUSTER2}$(\tau)$. By Theorem~\ref{thm1}, we have that $W =
  \BO{\tau \log^2n}$, with high probability. In what follows, we
  condition on this event.  For $\gamma = 4/\log_2 e$, define $H$ as
  the smallest integer such that $2^H/n \geq (\gamma W \log n)/n$, and
  let $t=\log n -H$.  For $0\leq i\leq t$, define the event $E_i =$
  "at the end of Iteration $H+i$ of the {\bf for} loop, at most
  $n/2^i$ nodes are still uncovered".  We now prove that the event
  $\cap_{i=0}^{t} E_i$ occurs with high probability. Observe that
  \begin{eqnarray*}
  \Pr\left(\cap_{i=0}^{t} E_i\right) & = &
  \Pr(E_0)\prod_{i=1}^{t} \Pr(E_i | E_0\cap \cdots \cap E_{i-1}) \\
  & = &
  \prod_{i=1}^{t} \Pr(E_i | E_0\cap \cdots \cap E_{i-1})
  \end{eqnarray*}
  since $E_0$ clearly holds with probability one.  Consider an
  arbitrary $i$, with $0 < i \leq t$, and assume that $E_0\cap \cdots
  \cap E_{i-1}$ holds. We prove that $E_i$ holds with high
  probability.  Let $V_i$ be the set of nodes already covered at the
  beginning of Iteration $H+i$. Since $E_{i-1}$ holds, we have that
  $|V-V_i|\leq n/2^{i-1}$. Clearly, if $|V-V_i|\leq n/2^i$ then $E_i$
  must hold with probability one. Thus, we consider only the case
  \[
  {n \over 2^i} < |V-V_i| \leq {n \over 2^{i-1}}
  \]
  Let $R_i = r(G, V_i, W)$ and observe that since $R_{\rm ALG}$ is the
  maximum radius of a partition of $G$ into $W$ clusters, we have that
  $R_{\rm ALG} \geq R_i$.  By the definition of $R_i$, there exist $W$
  nodes, say $u_1, u_2, \ldots, u_W \in V-V_i$, such that each node of
  $V-V_i$ is at distance at most $R_i$ from either $V_i$ or one of
  these nodes.  Let us consider the partition
  \[
  V-V_i = B_0 \cup B_1 \cup \cdots \cup B_W
  \]
  where $B_0$ is the set of nodes of $V-V_i$
  which are closer to $V_i$ than to any of the $u_j$'s, while $B_j$ is
  the set of nodes of $V-V_i$ which are closer to $u_j$ than to $V_i$
  or to any other $u_{j'}$ (with ties broken arbitrarily). Let
  \[
  J = \{j \geq 1 \; : \; |B_j| \geq |V-V_i|/(2W)\}
  \]
  It is easy to see that $|B_0|+\sum_{j \in J} |B_j| \geq |V-V_i|/2$.
  Since we assumed that $|V-V_i| > n/2^i$, we have that for every
  $B_j$ with $j \in J$,
  \[
  |B_j| \geq {|V-V_i| \over 2W} \geq
  {n \gamma \log n \over 2^{H+i+1}}
  \]
  As a consequence, since $\gamma = 4/\log_2 e$, a
  new center will be chosen from $B_j$ in Iteration $H+i$ with
  probability at least $1-1/n^2$. By applying the union bound we
  conclude that in Iteration $H+i$ a new center will fall in every
  $B_j$ with $j \in J$, and thus the number of uncovered nodes will at
  least halve, with probability at least $1-W/n^2 \geq 1-1/n$.

  By multiplying the probabilities of the $\BO{\log n}$ conditioned
  events, we conclude that event $\cap_{i=0}^{t} E_i$ occurs with high
  probability.  Finally, one can easily show that, with high
  probability, in the first $H$ iterations, $\BO{W \log n}$ clusters
  are added and, by conditioning on $\cap_{i=0}^{t} E_i$, at the
  beginning of each Iteration~$H+i$, $1\leq i\leq t$, $\BO{W\log n}$
  new clusters are added to $C$, for a total of $\BO{W\log^2n} =
  \BO{\tau\log^4n}$ clusters.
\end{proof}

Suppose we run {\tt CLUSTER2} on a graph $G$, for some $\tau \in
\BO{n/\log^4 n}$, to obtain a set $C$ of clusters of maximum radius
$R_{\rm ALG2}$. Let $G_C$ denote the quotient graph associated with
the clustering, where the nodes correspond to the clusters and there
is an edge between two nodes if there is an edge of $G$ whose
endpoints belong to the two corresponding clusters. Let $\Delta_C$ be
the diameter of $G_C$. We have:

\begin{theorem}\label{segment}
If  $\Delta$ is the true diameter of $G$, then
$\Delta_C = \BO{\left(\Delta/R_{\rm ALG}\right) \log^2 n)}$,
with high probability.
\end{theorem}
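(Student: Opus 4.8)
The plan is to bound $\Delta_C$ by exhibiting, for the worst pair of clusters, a short path in $G_C$ obtained by following a geodesic in $G$. Fix any two clusters and pick a representative node in each; since $G$ is connected there is a shortest path $P=(v_0,\dots,v_L)$ between them with $L\le\Delta$. The cluster map $v\mapsto C(v)$ is a graph homomorphism from $G$ onto $G_C$ (adjacent nodes map either to the same cluster or to adjacent clusters), so the sequence $C(v_0),C(v_1),\dots,C(v_L)$ is a walk in $G_C$, and its length, i.e.\ the number of \emph{cluster transitions} along $P$, upper bounds the distance in $G_C$ between the two endpoint clusters. Thus it suffices to prove that, with high probability, every geodesic of length at most $\Delta$ crosses $\BO{(\Delta/R_{\rm ALG})\log^2 n}$ distinct clusters; a union bound over the $\BO{n^2}$ pairs of representative nodes then yields the bound on $\Delta_C$.

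Next I would localize the counting. Partition $P$ into $\BO{\Delta/R_{\rm ALG}}$ consecutive subpaths $Q_1,Q_2,\dots$, each a geodesic of length $\Theta(R_{\rm ALG})$, and reduce the claim to showing that any single $Q_t$ is crossed by only $\BO{\log^2 n}$ clusters. The main structural tool here is a \emph{separation property} enforced by the growth rule of {\tt CLUSTER2}: since every cluster is grown for $2R_{\rm ALG}$ steps in each of the $\log n$ iterations, once a center has been activated in some iteration, the region of radius $2R_{\rm ALG}$ around it is covered by the end of that iteration; hence no node within distance $2R_{\rm ALG}$ of an already-active center can later be selected as a new center. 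Consequently, any two cluster centers lying within distance $2R_{\rm ALG}$ of one another must have been activated in the \emph{same} iteration, and since there are only $\log n$ iterations, the clusters crossing a fixed $Q_t$ are contributed by at most $\log n$ distinct iterations.

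It then remains to bound, for each iteration, the number of its clusters that reach into $Q_t$ by $\BO{\log n}$, which multiplied by the $\log n$ iterations and the $\BO{\Delta/R_{\rm ALG}}$ subpaths gives the claimed bound. For a fixed iteration I would combine the separation property with the random selection step: a cluster activated in iteration $H+i$ that touches $Q_t$ has a center that was still uncovered at the start of that iteration and lies within the fixed tube of radius $R_{\rm ALG2}=\BO{R_{\rm ALG}\log n}$ around $Q_t$, and such centers are chosen independently with probability $2^{H+i}/n$. Feeding in the coverage guarantees $E_i$ established in the proof of Lemma~\ref{correctness} (at most $n/2^i$ uncovered nodes remain after iteration $H+i$), together with a Chernoff bound, should control the expected number of such centers and yield the per-iteration count with high probability.

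The hard part, and the step I expect to carry the real weight, is exactly this per-iteration local bound: for a general graph there is no a priori packing bound on how many uncovered nodes lie in the radius-$R_{\rm ALG2}$ tube around $Q_t$, so the argument cannot be purely geometric and must extract the polylogarithmic count from the interplay between the $2R_{\rm ALG}$-per-iteration growth (which forces clusters to be ``thick'' at scale $R_{\rm ALG}$ and underlies the separation property) and the geometrically shrinking uncovered set guaranteed by the events $E_i$. Care is also needed to make the separation property rigorous under simultaneous, contention-based growth, since a node close to a center could in principle be shadowed by neighbouring clusters; I would argue that such a node is nonetheless covered by \emph{some} cluster by the end of the iteration, which is all the separation argument requires.
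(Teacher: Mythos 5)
Your outer skeleton is the same as the paper's: fix a shortest path between an arbitrary pair of nodes, use the fact that the quotient map is a homomorphism so that the number of distinct clusters meeting the path bounds the corresponding distance in $G_C$, cut the path into $\BO{\Delta/R_{\rm ALG}}$ segments of length $R_{\rm ALG}$, aim for $\BO{\log^2 n}$ clusters per segment, and close with a union bound over pairs. The gap is that the per-segment bound---the entire content of the theorem---is never established, and the mechanism you propose for it cannot work. The events $E_i$ from Lemma~\ref{correctness} control the number of uncovered nodes \emph{globally}; they give no control over how many uncovered nodes sit inside the radius-$R_{\rm ALG2}$ tube around a segment $Q_t$. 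In the worst case all of the roughly $n/2^{i-1}$ still-uncovered nodes lie in that tube, and then the expected number of centers selected there in iteration $H+i$ is about $(n/2^{i-1})\cdot 2^{H+i}/n = 2^{H+1} = \BT{\tau\log^3 n}$, vastly more than the $\BO{\log n}$ per iteration you need; no Chernoff bound can repair an expectation of the wrong order. You candidly flag exactly this as ``the hard part,'' but it is precisely the step where the proof has to happen. (Your separation property---centers within distance $2R_{\rm ALG}$ of each other must be activated in the same iteration---is correct under contention-based growth, but you only use it to conclude that at most $\log n$ iterations contribute, which is trivially true anyway.)

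The idea you are missing localizes the count without any bound on the local density of uncovered nodes. Stratify the tube around a segment $S$ into $2\log n$ annuli $C(S,j)$, $1 \leq j \leq 2\log n$, where $C(S,j)$ contains the nodes at distance between $(j-1)R_{\rm ALG}$ and $jR_{\rm ALG}$ from $S$, and let $i_j$ be the \emph{first} iteration in which some center is selected from $C(S,j)$. Two observations then finish the proof. First, since no center was selected from $C(S,j)$ before iteration $i_j$, w.h.p.\ the expected number of selections there in iteration $i_j-1$ was $\BO{\log n}$; because the selection probability only doubles from one iteration to the next while the pool of uncovered candidates only shrinks, iterations $i_j$ and $i_j+1$ produce $\BO{\log n}$ centers in $C(S,j)$, w.h.p. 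Second---the decisive timing argument---a center selected from $C(S,j)$ at any iteration $i \geq i_j+2$ can \emph{never} intersect $S$: it needs at least $(j-1)R_{\rm ALG}$ growth steps to reach $S$, while the clusters started in $C(S,j)$ at iteration $i_j$ enjoy a head start of at least $4R_{\rm ALG}$ steps and lie at distance at most $(j+1)R_{\rm ALG}$ from every node of $S$, so $S$ is entirely covered (by them, or by clusters that blocked them, which is all one needs) strictly before the latecomer can arrive. Hence each annulus contributes $\BO{\log n}$ intersecting clusters and each segment meets $\BO{\log^2 n}$ clusters, w.h.p. Note that the bookkeeping is organized per annulus, not per iteration: the per-iteration $\BO{\log n}$ count you were after is exactly what does not follow from global coverage guarantees, whereas the per-annulus count follows from the race toward $S$ between early and late clusters.
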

\begin{proof}
Let us fix an arbitrary pair of distinct nodes and an arbitrary
shortest path $\pi$ between them.  We show that
at most $\BO{\lceil |\pi|/R_{\rm ALG} \rceil\log^2 n}$ clusters
intersect  $\pi$ (i.e., contain nodes of $\pi$),
with high probability.  Divide $\pi$ into {\em segments\/} of length
$R_{\rm ALG}$, and consider one such segment $S$.  Clearly, all
clusters containing nodes of $S$ must have their centers belong to
nodes at distance at most $R_{\rm ALG2}$ from $S$ (i.e., distance at
most $R_{\rm ALG2}$ from the closest node of $S$).  Recall that
$R_{\rm ALG2} \leq 2R_{\rm ALG} \log n$. For $1 \leq j \leq 2\log n$,
let $C(S,j)$ be the set of nodes whose distance from $S$ is between
$(j-1)R_{\rm ALG}$ and $jR_{\rm ALG}$, and observe that any cluster intersecting
$S$ must be centered at a node belonging to one of the $C(S,j)$'s. We
claim that, with high probability, for any $j$, there are $\BO{\log
  n}$ clusters centered at nodes of $C(S,j)$ which may intersect $S$.  Fix
an index $j$, with $1 \leq j \leq 2\log n$, and let $i_j$ be the first
iteration of the for loop of algorithm {\tt CLUSTER2} in which some
center is selected from $C(S,j)$. It is easy to see that, due to the smooth
growth of the center selection probabilities,  the number
of centers selected from $C(S,j)$ in Iteration $i_j$ and in Iteration
$i_j+1$ is $\BO{\log n}$, with high probability. Consider now a center
$v$ (if any) selected from $C(S,j)$ in some Iteration $i>i_j+1$. In
order to reach $S$, the cluster centered at $v$ must grow for at least
$(j-1)R_{\rm ALG}$ steps. However, since in each iteration active clusters
grow by $2R_{\rm ALG}$ steps, by the time the cluster centered at $v$
reaches $S$, the nodes of $S$ have already been reached
and totally covered by clusters whose centers have been selected from
$C(S,j)$ in Iterations $i_j$ and $i_j+1$ or, possibly, by some other clusters
centered outside $C(S,j)$.  In conclusion, we have that the nodes of
segment $S$ will belong to $\BO{\log^2 n}$ clusters, with high
probability.  The theorem follows by applying the union bound over all
segments of $\pi$, and over all pairs of nodes in $G$.
\end{proof}

Let $\Delta' = 2R_{\rm ALG2}\cdot(\Delta_C+1)+\Delta_C$.  It is easy
to see that $\Delta_C \leq \Delta \leq \Delta'$. Moreover, since
$R_{\rm ALG2} \leq 2R_{\rm ALG} \log n$ and $R_{\rm ALG2} =
\BO{\Delta}$, we have from Theorem~\ref{segment} that $\Delta' =
\BO{\Delta \log^3 n}$.  The following corollary is immediate.
\begin{corollary} \label{diam-thm} Let $G$ be an $n$-node connected
  graph with diameter $\Delta$. Then, the
  clustering returned by {\tt CLUSTER2} can be used to compute two
  values $\Delta_C, \Delta'$ such that $\Delta_C \leq \Delta \leq
  \Delta' = \BO{\Delta \log^3 n}$, with high probability.
\end{corollary}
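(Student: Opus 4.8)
The plan is to establish the two enclosing inequalities $\Delta_C \leq \Delta$ and $\Delta \leq \Delta'$ directly from the structure of the quotient graph, and then to substitute the estimate of Theorem~\ref{segment} together with the radius bounds to obtain the asymptotic form $\Delta' = \BO{\Delta \log^3 n}$. Since $\Delta'$ is defined to be $2R_{\rm ALG2}\cdot(\Delta_C+1)+\Delta_C$, the corollary is really just a bookkeeping combination of facts already in place, so the work is in checking the two inequalities cleanly.

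For the lower bound $\Delta_C \leq \Delta$, I would project shortest paths of $G$ onto walks of $G_C$. Fix any two clusters $X,Y$ and pick representatives $u \in X$, $v \in Y$; a shortest path of $G$ from $u$ to $v$ passes through a sequence of clusters in which consecutive clusters are adjacent in $G_C$, and hence induces a walk of $G_C$ from $X$ to $Y$ of length at most $\dist(u,v)\leq \Delta$. Taking the maximum over all pairs of clusters gives $\Delta_C \leq \Delta$, with connectivity of $G_C$ inherited from that of $G$. For the upper bound $\Delta \leq \Delta'$ I would argue in the reverse direction, reconstructing a short $G$-path from a short $G_C$-path. Given any $u,v\in V$ lying in clusters $X_0, X_\ell$, take a shortest path $X_0,X_1,\ldots,X_\ell$ of $G_C$, so $\ell \leq \Delta_C$. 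Because the clustering has maximum radius $R_{\rm ALG2}$, every node of a cluster is within $G$-distance $R_{\rm ALG2}$ of the cluster's center, so by the triangle inequality any two nodes of the same cluster are within $G$-distance $2R_{\rm ALG2}$. Stitching the path together, I traverse each of the $\ell+1\leq \Delta_C+1$ clusters at cost at most $2R_{\rm ALG2}$ and cross each of the $\ell\leq \Delta_C$ inter-cluster edges at unit cost, giving $\dist(u,v)\leq 2R_{\rm ALG2}(\Delta_C+1)+\Delta_C=\Delta'$; since $u,v$ are arbitrary, $\Delta\leq\Delta'$.

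For the asymptotic bound I would then substitute. From Theorem~\ref{segment}, $\Delta_C = \BO{(\Delta/R_{\rm ALG})\log^2 n}$ with high probability, while $R_{\rm ALG2}\leq 2R_{\rm ALG}\log n$ and also $R_{\rm ALG2}\leq\Delta$ (a cluster's radius, being a $G$-distance from its center, cannot exceed the graph diameter). The dominant term $2R_{\rm ALG2}\Delta_C$ is then $\BO{R_{\rm ALG}\log n\cdot (\Delta/R_{\rm ALG})\log^2 n}=\BO{\Delta\log^3 n}$, with the factor $R_{\rm ALG}$ cancelling, while the remaining terms $2R_{\rm ALG2}=\BO{\Delta}$ and $\Delta_C=\BO{\Delta\log^2 n}$ are of strictly lower order. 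Hence $\Delta'=\BO{\Delta\log^3 n}$ with high probability, and the corollary follows.

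The only nonroutine point is the upper-bound direction $\Delta\leq\Delta'$: one must verify that a $G_C$-path of length $\Delta_C$ can indeed be lifted to a $G$-path, and keep the accounting of per-cluster traversal versus connecting edges exact, since it is precisely this counting that fixes the shape of the formula defining $\Delta'$. Everything else is a direct consequence of Theorem~\ref{segment} and the elementary radius bounds, so no further randomness analysis is required beyond conditioning on the high-probability event of that theorem.
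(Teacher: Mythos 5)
Your proposal is correct and follows exactly the paper's route: the paper defines $\Delta' = 2R_{\rm ALG2}\cdot(\Delta_C+1)+\Delta_C$, declares $\Delta_C \leq \Delta \leq \Delta'$ ``easy to see,'' and combines Theorem~\ref{segment} with $R_{\rm ALG2} \leq 2R_{\rm ALG}\log n$ and $R_{\rm ALG2} = \BO{\Delta}$ to get the $\BO{\Delta\log^3 n}$ bound. Your path-projection argument for $\Delta_C \leq \Delta$ and the lift-and-stitch argument for $\Delta \leq \Delta'$ are precisely the details the paper leaves implicit, and your accounting and substitution are sound.
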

\noindent
In order to get a tighter approximation, as in \cite{Meyer08}, after
the clustering we can compute the diameter $\Delta'_C$ of the
following weighted instance of the quotient graph
$G_C=(V_C,E_C)$. Specifically, we assign to each edge $(u,v) \in E_C$
a weight equal to the length of the shortest path in $G$ that connects
the two clusters associated with $u$ and $v$ and comprises only nodes
of these two clusters.  It is easy to see that $\Delta'' = 2R_{\rm
  ALG2}+\Delta'_C$ is an upper bound to the diameter $\Delta$ of $G$,
and $\Delta'' \leq \Delta'$.

It is important to remark that while in \cite{Meyer08} the
approximation factor for the diameter is proportional to the square
root of the number of clusters, with our improved clustering strategy
the approximation factor becomes independent of this quantity, a fact
that will also be confirmed by the experiments.  As we will see in the
next section, the number of clusters, hence the size of the quotient
graph, can be suitably chosen to reduce the complexity of the
algorithm, based on the memory resources.

\sloppy
As a final remark, we observe that the proof of Theorem~\ref{segment} shows
that for any two nodes $u,v$ in $G$ their distance $d(u,v)$ can be
upper bounded by a value $d'(u,v) = \BO{d(u,v) \log^3 n+R_{\rm
    ALG2}}$.  As a consequence, by running {\tt CLUSTER2}$(\tau)$ with
$\tau = \BO{\sqrt{n} / \log^4 n}$ and computing the $\BO{n}$-size
all-pairs shortest-path matrix of the (weighted) quotient graph $G_C$
we can obtain a linear-space distance oracle for $G$ featuring the
aforementioned approximation quality, which is polylogarithmic for
farther away nodes (i.e., nodes at distance $\BOM{R_{\rm ALG2}}$.

\section{Distributed implementation and performance analysis} \label{MRimplementation}

We now describe and analyze a distributed implementation of the
clustering and diameter-approximation algorithms devised in the
previous sections, using the MapReduce (MR) model introduced in
\cite{PietracaprinaPRSU12}. The MR model provides a rigorous
computational framework based on the popular MapReduce paradigm
\cite{DeanG08}, which is suitable for large-scale data processing on
clusters of loosely-coupled commodity servers. Similar models have
been recently proposed in \cite{KarloffSV10,GoodrichSZ11}.
An MR algorithm executes as a
sequence of \emph{rounds} where, in a round, a multiset $X$ of
key-value pairs is transformed into a new multiset $Y$ of pairs by
applying a given reducer function (simply called \emph{reducer} in
the rest of the paper) independently to each subset of pairs of $X$
having the same key.  The model features two parameters $M_G$ and
$M_L$, where $M_G$ is the maximum amount of global memory available
to the computation, and $M_L$ is the maximum amount of local memory
available to each reducer. We use \MR\ to denote a given instance of the model.  The
complexity of an \MR\ algorithm is defined as the number of rounds
executed in the worst case, and it is expressed as a function of the
input size and of $M_G$ and $M_L$. Considering that for big input instances
local and global space are premium resources, the main aim of algorithm design on the model
is to provide strategies exhibiting good space-round tradeoffs for  large ranges of the
parameter values.

The following facts are proved in
\cite{GoodrichSZ11,PietracaprinaPRSU12}.
\begin{fact} \label{prefixsorting} \sloppy The sorting and (segmented)
  prefix-sum primitives for inputs of size $n$ can be performed in
  $\BO{\log_{M_L} n}$ rounds in \MR\ with $M_G=\BT{n}$.
\end{fact}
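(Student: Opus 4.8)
The plan is to establish the prefix-sum primitive first and then bootstrap sorting from it, since both ultimately reduce to simulating an $M_L$-ary balanced tree of depth $\BO{\log_{M_L} n}$ on the \MR\ model, with each tree level realized by a single round. Throughout, the total number of key-value pairs alive in any round will be kept at $\BT{n}$, so that $M_G = \BT{n}$ suffices, and each reducer will receive at most $M_L$ pairs, so that a reducer may sort or combine its entire input locally in a single round.

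For the (segmented) prefix sum I would first run an \emph{up-sweep}: in round $\ell$ I assign to every element a key equal to the index of the block of $M_L^{\ell}$ consecutive positions it belongs to, so that each reducer sees a block of $M_L$ partial sums from the previous level and emits their (segment-respecting) total. After $\BO{\log_{M_L} n}$ rounds this yields the block sums at every granularity. I would then run a symmetric \emph{down-sweep} that pushes, level by level, the prefix offset standing immediately before each block down to its $M_L$ children, recomputing the offsets of the children from the block sums saved during the up-sweep; at the leaf level each reducer combines its $M_L$ elements with the incoming offset to output the final prefix values. For the segmented variant the only change is that a segment boundary resets the accumulated offset, which each reducer detects locally from the boundary markers it carries. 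Each round touches $\BT{n}$ pairs and hands each reducer $O(M_L)$ data, so the primitive runs in $\BO{\log_{M_L} n}$ rounds within $M_G = \BT{n}$.

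For sorting I would use distribution (sample) sort with fan-out $p = \Theta(\sqrt{M_L})$. If $n \le M_L$ a single reducer sorts everything in one round; otherwise I draw a random sample of size $\Theta(\sqrt{M_L}\,\log n)$, which is at most $M_L$ in the model's standard regime $M_L = \BOM{\log^2 n}$, sort it on one reducer, and take $p-1$ evenly spaced sample elements as splitters. The splitters induce $p$ buckets; I assign each element its bucket index by comparison against the broadcast splitters, then invoke the prefix-sum primitive on the per-bucket indicator arrays to compute a contiguous write position for every element, so that the $p$ buckets occupy disjoint regions of a global array of total size $\BT{n}$. I then recurse on all buckets in parallel. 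Since each level divides bucket sizes by $\Theta(p)$ and $\log_{\sqrt{M_L}} n = 2\log_{M_L} n$, the recursion has depth $\BO{\log_{M_L} n}$; concatenating the recursively sorted buckets in splitter order yields the sorted output, and every level again uses $\BT{n}$ global pairs and $O(M_L)$ data per reducer.

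The hard part is the splitter-balancing argument that keeps both the round count and the memory under control. I would argue, via a Chernoff bound on the rank of each chosen splitter followed by a union bound, that with the stated oversampling factor every bucket contains $\BO{n/p}$ elements with high probability; this simultaneously guarantees that no reducer is ever handed more than $O(M_L)$ data, so the local sort and local merge steps are legal, and that the bucket sizes genuinely shrink by a $\Theta(p)$ factor per level, so the recursion terminates in $\BO{\log_{M_L} n}$ levels rather than degenerating. A second, more routine obstacle is verifying that the cumulative global memory never exceeds $\BT{n}$ across the recursion; here the balancing bound is again what prevents the bucket arrays from overflowing, and a union bound over the $\BO{\log_{M_L} n}$ levels and their $\le n$ buckets turns the per-step high-probability guarantees into a single high-probability guarantee for the whole computation.
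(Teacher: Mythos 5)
First, a point of reference: the paper never proves this statement --- it is stated as a Fact imported wholesale from \cite{GoodrichSZ11,PietracaprinaPRSU12}, so there is no internal proof to compare against. Your reconstruction must therefore be judged against those references, which obtain the sorting bound \emph{deterministically}, by simulating Goodrich's BSP multi-way merge sort, rather than by randomized sample sort. Your prefix-sum half is correct and essentially the standard argument: the $M_L$-ary up-sweep/down-sweep gives one round per tree level, the level sizes decay geometrically so global space stays at $\BT{n}$, and segment boundaries are handled locally. No complaints there.

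The sorting half, however, has a genuine gap in exactly the step you pass over in one clause: ``assign each element its bucket index by comparison against the broadcast splitters.'' Broadcasting the $\Theta(\sqrt{M_L})$ splitters to the $\Theta(n/M_L)$ reducers holding a bucket's elements is not free in \MR. If the model allows a single reducer to emit output exceeding $M_L$ (bounded only by the global memory of the next round), the splitter-holding reducer can emit all $\BO{n/\sqrt{M_L}}$ replicated pairs in one round and your bound stands. But under the stricter reading in which a reducer's output per round is also $\BO{M_L}$ --- the reading consistent with the BSP-style simulations underlying \cite{GoodrichSZ11} --- the broadcast needs a replication tree of depth $\BT{\log_{M_L} n}$, and since fresh splitters must be disseminated at \emph{every} level of your recursion (the levels cannot be pipelined, as level-$\ell$ splitters depend on level-$(\ell-1)$ bucketing), the total round count degrades to $\BT{\log^2_{M_L} n}$ no matter how you tune the fan-out $p$. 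This is precisely the difficulty that Goodrich's deterministic merge-based algorithm is engineered to avoid, and it is why the cited proofs are considerably more involved than sample sort. To close the gap you must either justify the one-round broadcast from the model's definition or restructure the algorithm. Two smaller mismatches with the statement as written: your sorting succeeds only with high probability, whereas the Fact (and the cited results) carry no probabilistic qualifier; and the requirement that the $\Theta(\sqrt{M_L}\log n)$ sample fit in one reducer silently assumes $M_L = \BOM{\log^2 n}$, a restriction the Fact does not make. Neither caveat matters for this paper's use of the Fact (its algorithms are randomized and assume $M_L = \BOM{n^\epsilon}$), but as a proof of the Fact itself they are real deficiencies.
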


\begin{fact} \label{matrixmult} \sloppy Two $\ell \times \ell$-matrices
  can be multiplied in $\BO{\log_{M_L} n +\ell^3 /(M_G \sqrt{M_L})}$
  rounds in \MR.
\end{fact}

We can implement the sequence of cluster-growing steps embodied in the
main loops of {\tt CLUSTER} and {\tt CLUSTER2} as a progressive
shrinking of the original graph, by maintaining clusters coalesced
into single nodes and updating the adjacencies accordingly. Each
cluster-growing step requires a constant number of sorting and
(segmented) prefix operations on the collection of edges. Moreover,
the assignment of the original graph nodes to clusters can be easily
maintained with constant extra overhead.  By using
Fact~\ref{prefixsorting}, we can easily derive the following result.
\begin{lemma} \label{clusteringMR} {\tt CLUSTER} (resp., {\tt
    CLUSTER2}) can be implemented in the \MR\ model so that, when
  invoked on a graph $G$ with $n$ nodes and $m$ edges, it requires
  $\BO{R \log_{M_L} m}$ rounds, where $R$ is the total number of
  cluster-growing steps performed by the algorithm. In particular, if
  $M_L = \BOM{n^{\epsilon}}$, for some constant $\epsilon > 0$, the
  number of rounds becomes $\BO{R}$.
\end{lemma}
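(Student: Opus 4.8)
The plan is to emulate the abstract algorithms step by step on the \MR\ model, representing the evolving graph as a flat collection of labeled edges together with an array that records, for each original node, the identifier of the cluster currently containing it (or a null value if the node is still uncovered). Throughout the execution I maintain the invariant that the current ``shrunk'' graph has each cluster coalesced into a single super-node and that every edge is labeled by the pair of cluster identifiers (or uncovered-node identifiers) of its endpoints. Since $G$ is connected we have $m \geq n-1$, so the total size of this representation is $\BT{m}$, and I set $M_G = \BT{m}$. The key claim to establish is that each elementary operation of {\tt CLUSTER} and {\tt CLUSTER2} --- selecting a new batch of centers, performing one disjoint growing step, counting the uncovered nodes, and updating the node-to-cluster map --- can be realized with a constant number of sorting and segmented-prefix primitives over this size-$\BT{m}$ representation.

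For center selection, each reducer responsible for an uncovered node independently tosses the appropriate biased coin and, if successful, becomes a new singleton cluster with a fresh identifier obtained by a prefix-sum over the selected nodes; this costs $\BO{1}$ primitives. A single disjoint growing step is the delicate part: every uncovered node adjacent to at least one active cluster must be assigned to exactly one such cluster. I implement this by sorting the edge collection by the identifier of the uncovered endpoint, so that all clusters contending for a given node are grouped together; a segmented-prefix over each group then selects a single winner (say, the contending cluster of smallest identifier), which realizes the ``arbitrarily chosen'' tie-breaking needed to keep clusters disjoint. The newly covered nodes update their entries in the assignment array, after which the edges are relabeled to their new endpoint-cluster identifiers by a join against the updated array, again implemented by sorting; self-loops and duplicate parallel edges are discarded by one further sort, guaranteeing that the edge count never exceeds $m$ and so keeping global memory $\BT{m}$. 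Finally, the loop controls --- counting the residual uncovered nodes to decide whether $|V-V'|$ has halved in {\tt CLUSTER}, or to drive the {\bf for} loop of {\tt CLUSTER2} --- are plain reductions, hence a single prefix-sum each.

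Having reduced each elementary operation to $\BO{1}$ sorting/prefix primitives on an input of size $\BT{m}$, I invoke Fact~\ref{prefixsorting} to conclude that every such primitive runs in $\BO{\log_{M_L} m}$ rounds with $M_G = \BT{m}$. A single cluster-growing step therefore costs $\BO{\log_{M_L} m}$ rounds, and since the remaining per-iteration bookkeeping contributes only $\BO{1}$ additional primitives, the entire execution --- which performs $R$ growing steps in total --- runs in $\BO{R \log_{M_L} m}$ rounds. For the sharper bound I use that $m = \BO{n^2}$, so $\log m = \BO{\log n}$; when $M_L = \BOM{n^{\epsilon}}$ we get $\log_{M_L} m = \BO{\log m / (\epsilon \log n)} = \BO{1}$, and the round count collapses to $\BO{R}$.

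The step I expect to be the main obstacle is the faithful and memory-bounded implementation of a single disjoint growing step: I must ensure that the conflict resolution assigns each newly covered node to precisely one cluster (preserving disjointness), that the node-to-cluster map and the edge labels stay globally consistent after coalescing, and that self-loops and duplicate edges are pruned so the representation stays within $\BT{m}$ global memory across all $R$ steps. Once this single-step emulation is pinned down as a fixed-length sequence of sort and segmented-prefix calls, the round analysis is an immediate application of Fact~\ref{prefixsorting}.
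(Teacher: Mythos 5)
Your proposal is correct and takes essentially the same approach as the paper: the paper likewise implements the main loops as a progressive shrinking of the graph with clusters coalesced into super-nodes, charges each cluster-growing step a constant number of sorting and (segmented) prefix primitives, and applies Fact~\ref{prefixsorting} with $M_G = \BT{m}$ to get $\BO{R \log_{M_L} m}$ rounds, collapsing to $\BO{R}$ when $M_L = \BOM{n^{\epsilon}}$. Your write-up simply makes explicit the tie-breaking, relabeling, and bookkeeping details that the paper's brief argument leaves implicit.
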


The diameter-approximation algorithm can be implemented in the \MR\ model
by running {\tt CLUSTER2}$(\tau)$ for a value of $\tau$ suitably
chosen to allow the diameter of the quotient graph to be computed
efficiently. The following theorem shows the space-round tradeoffs
attainable when $M_L$ is large enough.
\begin{theorem} \label{MR-diameter} Let $G$ be a connected graph with
  $n$ nodes, $m$ edges, doubling dimension $b $ and diameter
  $\Delta$. Also, let $\epsilon' < \epsilon \in (0,1)$ be two arbitrary
  constants. On the \MR\ model, with $M_G = \BT{m}$ and $M_L =
  \BT{n^{\epsilon}}$, an upper bound $\Delta' = \BO{\Delta
    \log^3 n}$ to the diameter of $G$ can be computed in
  \[
  \BO{\left \lceil \Delta \log^{4/b} n \over
      \left(\max\{m^{1/3}n^{\epsilon/6},n^{\epsilon'} \}\right)^{1/b}
    \right\rceil \log^2 n}
  \]
  rounds, with high probability.
\end{theorem}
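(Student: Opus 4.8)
The plan is to instantiate {\tt CLUSTER2} with a carefully chosen granularity $\tau$, bound the rounds of the clustering phase via the implementation lemmas of this section together with the radius bound of Lemma~\ref{radius-bound}, and then argue that the subsequent computation of the quotient-graph diameter contributes only lower-order round overhead. Concretely, I would set $\tau = \BT{\max\{m^{1/3}n^{\epsilon/6},n^{\epsilon'}\}/\log^4 n}$, so that by Lemma~\ref{correctness} the resulting quotient graph $G_C$ has $\ell = \BO{\tau\log^4 n} = \BO{\max\{m^{1/3}n^{\epsilon/6},n^{\epsilon'}\}}$ nodes, with high probability. The final output is $\Delta' = 2R_{\rm ALG2}(\Delta_C+1)+\Delta_C$, whose approximation guarantee $\Delta_C \le \Delta \le \Delta' = \BO{\Delta\log^3 n}$ is exactly Corollary~\ref{diam-thm}; hence only the round count remains to be established.

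For the clustering phase, I would first observe that the total number of cluster-growing steps $R$ performed by {\tt CLUSTER2}$(\tau)$ is dominated by its {\bf for} loop, which executes $\log n$ iterations each growing the active clusters for $2R_{\rm ALG}$ steps, so $R = \BO{R_{\rm ALG}\log n}$ (the preliminary call to {\tt CLUSTER} contributes only $\BO{R_{\rm ALG}}$ steps). Plugging the estimate $R_{\rm ALG} = \BO{\lceil \Delta/\tau^{1/b}\rceil\log n}$ from Lemma~\ref{radius-bound} gives $R = \BO{\lceil \Delta/\tau^{1/b}\rceil\log^2 n}$. Since $M_L = \BT{n^\epsilon}$ with $\epsilon$ constant and $m \le n^2$, we have $\log_{M_L} m = \BO{1}$, so Lemma~\ref{clusteringMR} turns these growing steps into $\BO{R}$ rounds. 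Substituting $\tau^{1/b} = \BT{(\max\{m^{1/3}n^{\epsilon/6},n^{\epsilon'}\})^{1/b}/\log^{4/b} n}$ then yields precisely the claimed round bound for the clustering phase.

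It remains to show that computing the diameter $\Delta_C$ of $G_C$ adds only $\BO{\log n}$ rounds, a term dominated by the $\Omega(\log^2 n)$-factor clustering cost. Here I would split on which term attains the maximum defining $\ell$. When $m^{1/3}n^{\epsilon/6}\ge n^{\epsilon'}$, so $\ell = \BO{m^{1/3}n^{\epsilon/6}}$, I would compute all-pairs distances in $G_C$ by $\BO{\log n}$ repeated squarings of its $\ell\times\ell$ distance matrix; by Fact~\ref{matrixmult} each squaring costs $\BO{\log_{M_L} n + \ell^3/(M_G\sqrt{M_L})}$ rounds, and the choice of $\ell$ makes $\ell^3/(M_G\sqrt{M_L}) = \BO{1}$ (using $M_G=\BT{m}$, $M_L=\BT{n^\epsilon}$), so the whole diameter computation costs $\BO{\log n}$ rounds. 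When instead $n^{\epsilon'} > m^{1/3}n^{\epsilon/6}$, so $\ell = \BO{n^{\epsilon'}}$ with $\epsilon' < \epsilon$, the number of nodes of $G_C$ is $o(M_L)$, and I would gather the correspondingly small quotient graph into a single reducer and compute $\Delta_C$ locally in $\BO{1}$ rounds. Combining the two phases and taking a union bound over the high-probability events of Lemma~\ref{correctness} and Lemma~\ref{radius-bound} completes the argument.

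The step I expect to be most delicate is this second regime: one must check that the quotient graph produced with $\ell = \BO{n^{\epsilon'}}$ clusters genuinely fits within the local budget $M_L = \BT{n^\epsilon}$ — i.e. that its edge set, bounded by $\min\{m,\binom{\ell}{2}\}$, can be held by one reducer — and, dually, in the first regime that the $\ell\times\ell$ matrix fits within the global budget $M_G=\BT{m}$. Balancing these two memory constraints against the matrix-multiplication round cost of Fact~\ref{matrixmult} is exactly what forces the quantity $\max\{m^{1/3}n^{\epsilon/6},n^{\epsilon'}\}$ into the statement, and reconciling the constants in the two regimes with the hypothesis $\epsilon' < \epsilon$ is where the bookkeeping must be carried out with care.
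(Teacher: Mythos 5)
Your proposal follows the paper's proof in outline: essentially the same choices of $\tau$, the same use of Lemmas~\ref{radius-bound}, \ref{correctness} and~\ref{clusteringMR} to bound the clustering phase by $\BO{\lceil \Delta/\tau^{1/b}\rceil \log^2 n}$ rounds, and the same repeated-squaring argument via Fact~\ref{matrixmult} with $\ell^3 = \BO{m\, n^{\epsilon/2}} = \BO{M_G\sqrt{M_L}}$ in the regime where $m^{1/3}n^{\epsilon/6}$ attains the maximum. The gap is in the other regime, and it is not mere bookkeeping. When $n^{\epsilon'} > m^{1/3}n^{\epsilon/6}$ you argue that, since $|V_C| = \BO{n^{\epsilon'}} = o(M_L)$, the quotient graph can be gathered into one reducer. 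But the reducer must hold $E_C$, not just $V_C$, and {\tt CLUSTER2} gives no bound on $|E_C|$ beyond $\min\{m, \binom{|V_C|}{2}\}$, which can exceed $M_L = \BT{n^\epsilon}$ by a polynomial factor. For instance, with $\epsilon = 3/5$, $\epsilon' = 1/2$ and $m = \BT{n^{1.1}}$, the regime condition $m < n^{3\epsilon'-\epsilon/2} = n^{1.2}$ holds, yet $G_C$ may have $\BT{n}$ edges against a local budget of $\BT{n^{3/5}}$. Nor can you fall back on Fact~\ref{matrixmult} here: with $\ell = \BT{n^{\epsilon'}}$ the term $\ell^3/(M_G\sqrt{M_L}) = \BT{n^{3\epsilon'}/(m\, n^{\epsilon/2})}$ exceeds $1$ exactly when the regime condition holds, and can be polynomially large (e.g.\ $m = \BT{n}$, $\epsilon'$ close to $\epsilon$ close to $1$ gives nearly $n^{0.95}$ rounds per squaring).

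The paper closes exactly this hole with the sparsification technique of \cite{BaswanaS07}: when $|E_C| > M_L$, the quotient graph is replaced, in $\BO{1}$ additional rounds (implemented by a constant number of prefix/sorting operations), by a spanner $G'_C = (V_C, E'_C)$ with $|E'_C| \le M_L$ whose diameter exceeds that of $G_C$ by a factor at most $\BO{\epsilon'/(\epsilon-\epsilon')} = \BO{1}$; the diameter of $G'_C$ is then computed on a single reducer, and the constant stretch is absorbed into the $\BO{\Delta\log^3 n}$ guarantee of Corollary~\ref{diam-thm}. Note that this is the one place where the hypothesis $\epsilon' < \epsilon$ does real work (it makes the spanner stretch constant), whereas in your write-up that hypothesis is never genuinely used. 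Two smaller omissions: the paper caps the second choice of granularity at $\tau = \BO{\min\{n,\, m^{1/3}n^{\epsilon/6}/\log^4 n\}}$ so that the constraint $\tau = \BO{n/\log^4 n}$ required by the clustering analysis is respected even for dense graphs, and it re-runs {\tt CLUSTER2} in the low-probability event that more clusters than stated are produced; your argument should do the same.
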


\begin{proof}

Fix $\tau = \BT{n^{\epsilon'}/\log^4 n}$ so that {\tt
  CLUSTER2}$(\tau)$ returns $\BO{n^{\epsilon'}}$ clusters with high
probability.  (In case the number of returned clusters is larger, we
repeat the execution of {\tt CLUSTER2}.)  Let $G_C=(V_C,E_C)$ be
quotient graph associated with the returned clustering. If $|E_C| \leq
M_L$, we can compute the diameter of $G_C$ in one round using a single
reducer. Otherwise, by employing the sparsification technique
presented in \cite{BaswanaS07} we transform $G_C$ into a new graph
$G'_C=(V_C,E'_C)$ with $|E'_C| \leq M_L$, whose diameter is a factor
at most $\BO{\epsilon'/(\epsilon-\epsilon')} = \BO{1}$ larger than the
diameter of $G_C$. The sparsification technique requires a constant
number of cluster growing steps similar in spirit to those described
above, which can be realized through a constant number of prefix and
sorting operations. Hence, the transformation can be implemented in
$\BO{1}$ rounds in \MR. Once $G'_C$ is obtained, its diameter and the
resulting approximation $\Delta'$ to $\Delta$ can be computed in one
round with a single reducer.  Therefore, by combining the results of
Lemmas~\ref{radius-bound}, \ref{correctness}, and~\ref{clusteringMR},
we have that {\tt CLUSTER}$(\tau)$ runs in $\BO{\lceil \Delta /
  \tau^{1/b}\rceil \log n}$ rounds, and {\tt CLUSTER2}$(\tau)$ runs
$\BO{\lceil \Delta / \tau^{1/b}\rceil \log^2 n}$ rounds.  Hence, we
have that the total number of  rounds for computing $\Delta'$
is $\BO{\lceil \Delta \log^{4/b}n/n^{\epsilon'/b} \rceil \log^2 n}$.
Alternatively, we can set $\tau = \BO{\min \{n,
  m^{1/3}n^{\epsilon/6}/\log^4 n\}}$ so to obtain a quotient graph
$G_C$ with $|V_C| = \BT{m^{1/3}n^{\epsilon/6}}$ nodes. We can compute
the diameter of the quotient graph by repeated squaring of the
adjacency matrix. By applying the result of Fact~\ref{matrixmult} with
$\ell = |V_C|$ and observing that $|V_C|^3 = \BO{m \cdot
  n^{\epsilon/2}}=\BO{M_G \sqrt{M_L}}$, we conclude that the
computation of the quotient graph diameter requires only an extra
logarithmic number of rounds.  In this fashion, the total number of
rounds for computing $\Delta'$ becomes $\BO{\lceil \Delta
  \log^{4/b}n/(m^{1/3}n^{\epsilon/6})^{1/b} \rceil \log^2 n}$.  The
theorem follows by noting that for both the above implementations, the
quality of the approximation is ensured by Corollary~\ref{diam-thm}.
\end{proof}

We remark that while the upper bound on the approximation factor is
independent of the doubling dimension of the graph, the round
complexity is expressed as a function of it.  This does not restrict
the generality of the algorithm but allows us to show that for graphs
with small doubling dimension, typically graphs with low expansion,
the number of rounds can be made substantially smaller than the graph
diameter and, in fact, this number decreases as more local memory is
available for the reducers, still using linear global space.  This
feature represents the key computational advantage of our algorithm
with respect to other linear-space algorithms, that, while yielding
tighter approximations, require $\BOM{\Delta}$ rounds.

\section{Experimental results} \label{experiments}
We tested our algorithms on a cluster of 16 hosts, each equipped with
a 12 GB RAM and a 4-core I7 processor, connected by a 10 gigabit
Ethernet network. The algorithms have been implemented using Apache
Spark~\cite{Spark}, a popular engine for
distributed large-scale data processing. We performed tests on several
large graphs whose main characteristics are reported in
Table~\ref{tab:datasets}.  The first graph is a symmetrization of a
subgraph of the Twitter network obtained from the LAW website
\cite{LAW}. The next four graphs are from the Stanford Large Network
Datasets Collection \cite{SNAP} and represent, respectively, the
Livejournal social network and three road networks.  The last graph is
a synthetic $1000 \times 1000$ mesh, which has been included since its
doubling dimension is known, unlike the other graphs, and constant
($b=2$), hence it is an example of a graph where our algorithms are
provably effective.

\subsection{Experiments on the Clustering Algorithm}
We compared the quality of the clustering returned by algorithm {\tt
  CLUSTER} (see Section~\ref{clustering}) against that of the
clustering returned by the algorithm presented in \cite{MillerPX13}
and reviewed in Section~\ref{previouswork}, which, for brevity, we
call {\tt MPX}. Recall that {\tt CLUSTER} uses a parameter $\tau$ to
control the number of clusters, while MPX uses (an exponential
distribution of) parameter $\beta$ to decide when nodes are possibly
activated as cluster centers, hence indirectly controlling the number
of clusters.  Both algorithms aim at computing a decomposition of the
graph into clusters of small radius, so we focused the experiments on
comparing the maximum radius of the returned clusterings. However,
since the minimum maximum radius attainable by any clustering is a
nonincreasing function of the number of clusters, but neither
algorithm is able to precisely fix such a number a priori, we
structured the experiments as follows.

We aimed at decomposition granularities (i.e., number of clusters)
which are roughly three orders of magnitude smaller than the number of
nodes for small-diameter graphs, and roughly two orders of magnitude
smaller than the number of nodes for large-diameter graphs.  We ran
{\tt MPX} and {\tt CLUSTER} setting their parameters $\beta$ and
$\tau$ so to obtain a granularity close enough to the desired one, and
compared the maximum cluster radius obtained by the two algorithms. In
order to be conservative, we gave {\tt MPX} a slight advantage setting
$\beta$ so to always yield a comparable but larger number of clusters
with respect to {\tt CLUSTER}.

Table~\ref{tab:comparison-miller} shows the results of the experiments
for the benchmark graphs. Each row reports the graph, and, for each
algorithm, the number of nodes ($n_C$) and edges ($m_C$) of the
quotient graph associated with the clustering, and the maximum cluster
radius ($r$). The table provides a clear evidence that our algorithm
is more effective in keeping the maximum cluster radius small,
especially for graphs of large diameter. This is partly due to the
fact that {\tt MPX} starts growing only a few clusters, and before
more cluster centers are activated 
the radius of the initial clusters is already grown large. On
the other hand, MPX is often more effective in reducing the number of
edges of the quotient graph, which is in fact the main objective of
the MPX decomposition strategy. This is particularly evident for the
first two graphs in the table, which represent social networks, hence
feature low diameter and high expansion (thus, probably, high doubling
dimension). In these cases, the few clusters initially grown by MPX are
able to absorb entirely highly expanding components, thus resulting in
a more drastic reduction of the edges.

\subsection{Experiments on the Diameter-Ap\-prox\-i\-ma\-tion Algorithm}

For the diameter approximation, we implemented a simplified version of
the algorithm presented in Section~\ref{diameter}, where, for
efficiency, we used {\tt CLUSTER} instead of {\tt CLUSTER2}, thus
avoiding repeating the clustering twice. Also, in order to get a
tighter approximation, we computed the diameter of the weighted
variant of the quotient graph as discussed at the end of
Section~\ref{diameter}. We performed three sets of experiments, which
are discussed below.

The first set of experiments aimed at testing the quality of the
diameter approximation provided by our algorithm.  The results of the
experiments are reported in Table~\ref{tab:diameter-approximation}.
For each graph of Table~\ref{tab:datasets} we estimated the diameter
by running our algorithm with two clusterings of different granularities
(dubbed \emph{coarser} and \emph{finer} clustering, respectively)
reporting, in each case, the number of nodes ($n_C$) and edges ($m_C$)
of the quotient graph $G_C$, the approximation $\Delta'$ and the true
diameter $\Delta$\footnote{In fact, in some cases the ``true
diameter'' reported in the table has been computed through
  approximate yet very accurate algorithms and may exhibit some small
  discrepancies with the actual value.}. Since the quotient graphs
turned out to be sufficiently sparse, the use of sparsification
techniques mentioned in Section~\ref{MRimplementation} was not needed.
We observe that in all cases $\Delta'/\Delta < 2$ and, in fact, the
approximation factor appears to decrease for sparse, long-diameter
graphs. Also, we observe that, as implied by the theoretical results,
the quality of the approximation does not seem to be influenced by the
granularity of the clustering. Therefore, for very large graphs, or
distributed platforms where individual machines are provided with
small local memory, one can resort to a very coarse clustering in
order to fit the whole quotient graph in one machine, and still obtain
a good approximation to the diameter, at the expense, however, of an
increased number of rounds, which are needed to compute the
clustering.

With the second set of experiments, we assessed the time performance of
our algorithm against two competitors: HADI~\cite{KangTAFL11}, which
was reviewed in Section~\ref{previouswork} and provides a rather tight
diameter (under)estimation; and Breadth First Search (BFS), which, as
well known, can be employed to obtain an upper bound to the diameter
within a factor two. HADI's original code, available from \cite{HADI},
was written for the Hadoop framework.  Because of Hadoop's known large
overhead, for fairness, we reimplemented HADI in Spark, with a
performance gain of at least one order of magnitude.  As for BFS, we
implemented a simple and efficient version in Spark.
Table~\ref{USvsHADI} reports the running times and the diameter
estimates obtained with the three algorithms where, for our algorithm,
we used the finer clustering granularity adopted in the experiments
reported in Table~\ref{tab:diameter-approximation}. The figures in the
table clearly show that HADI, while yielding a very accurate estimate
of the diameter, is much slower than our algorithm, by orders of
magnitude for large-diameter graphs. This is due to the fact that HADI
requires $\BT{\Delta}$ rounds and in each round the communication
volume is linear in the number of edges of the input graph. On the
other hand BFS, whose approximation guarantee is similar to ours in
practice, outperforms HADI and, as expected, is considerably slower
than our algorithm on large-diameter graphs. Indeed, BFS still
requires $\BT{\Delta}$ rounds as HADI, but its aggregate
communication volume (rather than the per round communication volume)
is linear in the number of edges of the input
graph.

As remarked in the discussion following Lemma~\ref{radius-bound},
a  desirable feature of our strategy is its capability to
adapt to irregularities of the graph topology, which may
have a larger impact on the performance of the other strategies.
In order to provide experimental evidence of this phenomenon,
our third set of experiments reports the running times
of our algorithm and BFS on three variants of the two
small-diameter graphs (livejournal and twitter) obtained by
appending a chain of $c \cdot \Delta$ extra nodes to a
randomly chosen node, with $c=1, 2, 4, 6, 8, 10$, thus increasing the
diameter accordingly, without substantially altering the overall structure of
the base graph. The plots in Figure~\ref{fig:tails}
clearly show that while the running time of our algorithm is
basically unaltered by the modification, that of BFS
grows linearly with $c$, as expected due to the
strict dependence of the BFS number of rounds from the diameter.
A similar behaviour is to be expected with HADI because of the
same reason.

Putting it all together, the experiments support the theoretical
analysis since they provide evidence that the main competitive
advantages of our algorithm, which are evident in large-diameter
graphs, are the linear aggregate communication volume (as in BFS)
coupled with its ability to run in a number of rounds which can be
substantially smaller than $\Delta$.

\section{Conclusions}\label{conclusions}
We developed a novel parallel decomposition strategy for unweighted,
undirected graphs which ensures a tighter control on both the number
of clusters and their maximum radius, with respect to similar previous
decompositions. We employed our decomposition to devise parallel
polylogarithmic approximation algorithms for the $k$-center
problem and for computing the graph diameter. The algorithms use only
linear overall space and, for a relevant class of graphs (i.e., those
of low doubling dimension), their parallel depth can be made
substantially sublinear in the diameter as long as local memories at
the processing nodes are sufficiently large but still asymptotically
smaller than the graph size.

While the improvement of the approximation bounds and the parallel
depth of our algorithms is a natural direction for further research,
the extension of our findings to the realm of weighted graphs is a
another challenging and relevant open problem. We are currently
exploring this latter issue and have devised a preliminary
decomposition strategy that, together with the number clusters and
their weighted radius, also controls their hop radius, which governs
the parallel depth of the computation.


\onecolumn

\begin{center}
\begin{minipage}{0.45\textwidth}
  \vspace{3cm}
  \small
  \captionof{table}{Characteristics of the graphs used in our experiments}
  \label{tab:datasets}
  \centering
  \begin{tabular}{lrrr}
    \toprule
    Dataset & nodes & edges & diameter\\
    \midrule

    twitter & 39,774,960 & 684,451,342 & 16 \\
    \midrule
    livejournal & 3,997,962 & 34,681,189 & 21 \\
    \midrule
    roads-CA & 1,965,206 & 2,766,607 & 849 \\
    \midrule
    roads-PA & 1,088,092 & 1,541,898 & 786 \\
    \midrule
    roads-TX & 1,379,917 & 1,921,660 & 1,054 \\
    \midrule
    mesh1000 & 1,000,000 & 1,998,000 & 1,998 \\

    \bottomrule
  \end{tabular}
\end{minipage}
~
\begin{minipage}{0.5\textwidth}
  \vspace{2.05cm}
  \small
  \captionof{table}{Comparison between the clusterings returned by
    {\tt CLUSTER} and {\tt MPX}. $n_C$ is the number of clusters,
    $m_C$ is the number of edges between clusters, and $r$ is the
    maximum cluster radius.}
  \label{tab:comparison-miller}
  \centering
  \begin{tabular}{l ggg rrr}
    \toprule

    & \multicolumn{3}{>{\columncolor{Gray}}c}{Algorithm {\tt CLUSTER}} &
    \multicolumn{3}{c}{Algorithm {\tt MPX}} \\

    Dataset & $n_C$ & $m_C$ & $r$ & $n_C$ & $m_C$ & $r$ \\
    \midrule

    twitter
    & 40001 & 17216285 & 5 & 41431 & 109348 & 6 \\
    \midrule

    livejournal
    & 4020 & 230326 & 7 & 5796 & 17098 & 9 \\ 
    \midrule

    roads-CA
    & 15038 & 40597 & 31 & 16429 & 34021 & 61 \\
    \midrule

    roads-PA
    & 7710 & 13300 & 30 & 8529 & 18446 & 58 \\
    \midrule

    roads-TX
    & 10653 & 28582 & 30 & 11238 & 23308 & 55 \\
    \midrule

    mesh1000
    & 7641 & 18476 & 34 & 9112 & 25885 & 56 \\

    \bottomrule
  \end{tabular}
\end{minipage}
\end{center}

\vfill

\begin{center}
  \captionof{table}{Diameter approximation returned by our algorithm on the
    benchmark graphs. $\Delta$ is the diameter of the
    graph, $\Delta'$ is the approximation given by the algorithm.}
  \label{tab:diameter-approximation}
  \centering \small
  \begin{tabular}{l gggg rrrr}
    \toprule
    & \multicolumn{4}{>{\columncolor{Gray}}c}{Coarser clustering}
    & \multicolumn{4}{c}{Finer clustering} \\

    Dataset
    & $n_C$  & $m_C$ & $\Delta'$ & $\Delta$
    & $n_C$  & $m_C$ & $\Delta'$ & $\Delta$ \\
    \midrule

    twitter
    & 1835 & 18865 & 23 & 16   
    & 5276 & 895356 & 27 & 16 \\ 
    \midrule

    livejournal
    & 1933 & 24442 & 29 & 21    
    & 7837 & 570608 & 29 & 21 \\ 
    \midrule

    roads-CA
    & 1835 & 5888 & 1504 & 849     
    & 3863 & 10946 & 1477 & 849 \\ 
    \midrule

    roads-PA
    & 1087 & 3261 & 1240 & 786     
    & 4286 & 12314 & 1245 & 786 \\  
    \midrule

    roads-TX
    & 1316 & 3625 & 1568 & 1054  
    & 3821 & 10880 & 1603 & 1054 \\ 
    \midrule

    mesh1000
t    & 880 & 3224 & 2128 & 1998  
    & 3588 & 14198 & 2014 & 1998 \\ 

    \bottomrule
  \end{tabular}
\end{center}

\vfill

\begin{center}
\begin{minipage}{0.47\textwidth}
  \captionof{table}{Comparison of our approach ({\tt CLUSTER}) with
    {\tt HADI} and {\tt BFS}. Numbers in parentheses are the estimated
    diameter $\Delta'$. Column $\Delta$ reports the original diameter.}
  \label{USvsHADI}
  \centering
  \small
  \begin{tabular}{l rrr r}
    \toprule
    & \multicolumn{3}{c}{Time ($\Delta'$)} & \\

    Dataset
    & {\tt CLUSTER} & {\tt BFS} & {\tt HADI} & $\quad\Delta$ \\
    \midrule

    twitter                
    & 303 (27) & 144 (22) & 3697 (14) & 16 \\
    \midrule

    livejournal                 
    & 113 (29) & 123 (26) & 388 (26) & 21 \\
    \midrule

    roads-CA                  
    & 742 (1477) & 5796 (1418) & 11008 (838) & 849 \\
    \midrule

    roads-PA                  
    & 369 (1245) & 5245 (1244) & 10090 (770) & 786 \\
   \midrule

    roads-TX                  
    & 622 (1603) & 5844 (1466) & 12572 (998) & 1054 \\
    \midrule

    mesh1000                    
    & 373 (2014) & 8627 (2224) & 17287 (1998) & 1998 \\

    \bottomrule
  \end{tabular}
\end{minipage}
\hfill
\begin{minipage}{0.47\textwidth}
  \captionof{figure}{Performance of {\tt CLUSTER} and {\tt BFS} on
    graphs with small variations.}
  \label{fig:tails}
  \centering
  \small
  \includegraphics[width=\columnwidth]{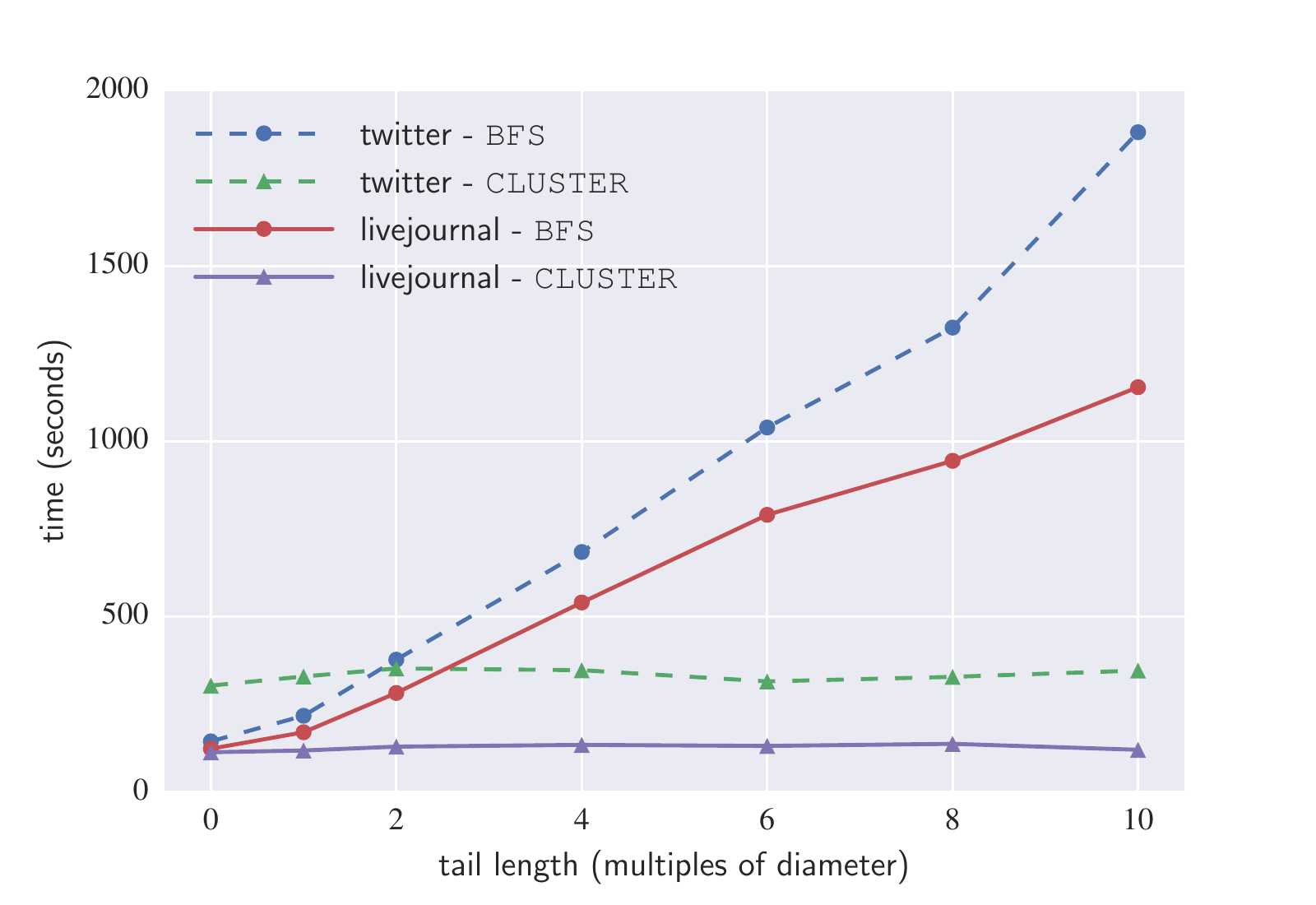}
\end{minipage}
\end{center}

\end{document}